
\documentclass[final]{IEEEtran}
\IEEEoverridecommandlockouts
\usepackage{cite}
\usepackage{amsmath,amssymb,amsfonts,amsthm}
\usepackage{algorithmic}
\usepackage{graphicx}
\usepackage{textcomp}
\usepackage{xcolor}
\usepackage{color}
\usepackage{algorithmic}
\usepackage{url}
\usepackage{lipsum}
\usepackage{multirow}
\usepackage[normalem]{ulem}
\usepackage{afterpage}
\usepackage{bbm}
\usepackage{dsfont}
\usepackage[ruled,lined,boxed]{algorithm2e}
\usepackage{mathtools, nccmath}
\usepackage{lipsum}
\usepackage{bm}
\usepackage{threeparttable}
\usepackage[colorlinks=true,citecolor=blue,linkcolor=blue]{hyperref}

\ifCLASSOPTIONcompsoc
\usepackage[caption=false, font=normalsize, labelfont=sf, textfont=sf]{subfig}
\else
\usepackage[caption=false, font=footnotesize]{subfig}
\fi
\DeclarePairedDelimiter{\nint}\lfloor\rceil
\DeclareMathAlphabet{\mathbcal}{OMS}{cmsy}{b}{n}
\def\BibTeX{{\rm B\kern-.05em{\sc i\kern-.025em b}\kern-.08em
		T\kern-.1667em\lower.7ex\hbox{E}\kern-.125emX}}

\newtheorem{prop}{Proposition}

\newtheoremstyle{iremark}
{\topsep}   
{\topsep}   
{\upshape}  
{0pt}       
{\itshape}  
{.}         
{5pt plus 1pt minus 1pt} 
{\thmname{#1}\thmnumber{ \itshape#2}\thmnote{ (#3)}} 
\theoremstyle{iremark}
\newtheorem{remark}{Remark}

\mathchardef\mhyphen="2D
\makeatletter 
\let\myorg@bibitem\bibitem
\def\bibitem#1#2\par{%
	\@ifundefined{bibitem@#1}{%
		\myorg@bibitem{#1}#2\par
	}{%
		\begingroup
		\color{\csname bibitem@#1\endcsname}%
		\myorg@bibitem{#1}#2\par
		\endgroup
	}%
}
\makeatother
\begin{document}
\title{Sensing-Enhanced Channel Estimation for Near-Field XL-MIMO Systems\\
}
\author{Shicong~Liu,~\IEEEmembership{Graduate Student Member,~IEEE}, Xianghao~Yu,~\IEEEmembership{Senior Member,~IEEE}, Zhen~Gao,~\IEEEmembership{Member,~IEEE}, Jie~Xu,~\IEEEmembership{Senior Member,~IEEE}, Derrick~Wing~Kwan~Ng,~\IEEEmembership{Fellow,~IEEE}, and Shuguang~Cui,~\IEEEmembership{Fellow,~IEEE}
	\vspace{-1mm}
	\thanks{		
		Part of this paper was presented at IEEE International Conference on Communications (ICC), Denver, CO, USA, June, 2024~\cite{ICC24Liu2}. ({\em Corresponding author: Xianghao Yu.})
		
		Shicong Liu and Xianghao Yu are with the Department of Electrical Engineering, City University of Hong Kong, Hong Kong (email: sc.liu@my.cityu.edu.hk, alex.yu@cityu.edu.hk).
		
		
		Zhen Gao is with the Advanced Technology Research Institute, BIT (Jinan), Jinan 250307, China, and also with the Yangtze Delta Region Academy, BIT (Jiaxing), Jiaxing 314019, China, also with the Beijing Institute of Technology (Zhuhai), Zhuhai 519088, China (e-mail: gaozhen16@bit.edu.cn).
		
		Jie Xu and Shuguang Cui are with the School of Science and Engineering (SSE), the Shenzhen Future Network of Intelligence Institute (FNii-Shenzhen), and the Guangdong Provincial Key Laboratory of Future Networks of Intelligence, The Chinese University of Hong Kong (Shenzhen), Guangdong 518172, China (email: xujie@cuhk.edu.cn, shuguangcui@cuhk.edu.cn).
		
		Derrick Wing Kwan Ng is with the School of Electrical Engineering and Telecommunications, University of New South Wales, Sydney, Australia (email: w.k.ng@unsw.edu.au).
		
		\texttt{https://github.com/scliubit/sensing-ce-xlmimo}
	}
}
\maketitle
\begin{abstract}
	Future sixth-generation (6G) systems are expected to leverage extremely large-scale multiple-input multiple-output (XL-MIMO) technology, which significantly expands the range of the near-field region. 
	The spherical wavefront characteristics in the near field introduce additional degrees of freedom (DoFs), namely distance and angle, into the channel model, which leads to unique challenges in channel estimation (CE). 
	In this paper, we propose a new sensing-enhanced uplink CE scheme for near-field XL-MIMO, which notably reduces the required quantity of \textit{baseband samples} and the \textit{dictionary size}.
	In particular, we first propose a sensing method that can be accomplished in a single time slot. 
	It employs power sensors embedded within the antenna elements to measure the received power pattern rather than baseband samples. A time inversion algorithm is then proposed to precisely estimate the locations of users and scatterers, which offers a substantially lower computational complexity. 
	Based on the estimated locations from sensing, a novel dictionary is then proposed by considering the eigen-problem based on the near-field transmission model, which facilitates efficient near-field CE with less baseband sampling and a more lightweight dictionary. 
	Moreover, we derive the general form of the eigenvectors associated with the near-field channel matrix, revealing their noteworthy connection to the discrete prolate spheroidal sequence (DPSS). 
	Simulation results unveil that the proposed time inversion algorithm achieves accurate localization with power measurements only, and remarkably outperforms various widely-adopted algorithms in terms of computational complexity. 
	Furthermore, the proposed eigen-dictionary considerably improves the accuracy in CE with a compact dictionary size and a drastic reduction in baseband samples by up to 66\%.
\end{abstract}

\begin{IEEEkeywords}
	Channel estimation, compressive sensing, discrete prolate spheroidal sequence, near-field localization, sensing-enhanced communication.
\end{IEEEkeywords}

\section{Introduction}
\bstctlcite{IEEEexample:BSTcontrol}
\IEEEPARstart{T}{he} development of massive multiple-input multiple-output (MIMO) systems has spurred a vision to reshape and control transmission environments of propagating electromagnetic (EM) waves. 
This vision leads to the emergence of advanced technologies, such as cell-free massive MIMO and reconfigurable intelligent surfaces (RIS), that enhance service coverage and eliminate dead zones in wireless networks\cite{9690635,7827017}. Particularly, for centralized large-scale antenna array deployment strategies, like RIS and extremely large-scale MIMO (XL-MIMO)\cite{10098681}, their vast apertures drastically expand the boundaries of the near-field region~\cite{9139337}. While this transition in XL-MIMO systems enables efficient beam focusing for achieving 
higher transmission rates~\cite{9738442}, it introduces novel challenges in the design of wireless systems.

In the practical near-field region, the spherical wavefront of electromagnetic waves diverges from the planar wavefront typically assumed in the far-field counterpart. 
This introduces two degrees of freedom (DoFs), i.e., angle and distance, to accurately describe the spatial characteristics, thereby complicating the task of channel estimation (CE)~\cite{9620081}. 
Moreover, the proliferation of antennas in XL-MIMO arrays not only substantially enlarges the dimensionality of channel matrices to be estimated, but also makes it impractical to deploy a dedicated radio frequency (RF) chain for each antenna element~\cite{6717211}. Therefore, hybrid analog and digital antenna array architecture  with a limited number of RF chains has been widely adopted~\cite{7400949}. Nevertheless, this solution only allows a low-dimensional baseband sample per time slot, which poses additional challenges in accurately reconstructing the high-dimensional channel in XL-MIMO systems. 

\subsection{Related Works}

To mitigate this challenge, 
compressive sensing (CS)-based techniques~\cite{10153711,9547795,8949454,9693928,10195974,10217152} have been proposed for XL-MIMO CE by leveraging the inherent sparsity of wireless channel matrices~\cite{1614066}. 
Particularly, iterative greedy CS-based methods~\cite{9693928,8949454,10195974,10217152} have been extensively investigated in the CE literature due to their high computational efficiency, which are typically developed based on heuristic selection rules by selecting matched codewords in the dictionary.
Unfortunately, the conventional far-field sparse representation dictionary, e.g., discrete Fourier transform (DFT) dictionary, no longer matches the near-field channel model~\cite{8949454}. 
As a remedy, a spherical wave dictionary was introduced in~\cite{9693928}, based on which a polar-domain sampling scheme was tailored for near-field CE. 
Subsequently, a hierarchical near-field dictionary was proposed, where the upper-layer dictionaries are exploited for target location search while the lower-layer ones are adopted to achieve the highest beam gain around the steering points~\cite{10195974}. Besides, an alternative dictionary design was recently presented in~\cite{10217152} by utilizing the spatial-chirp beam. 

However, due to the two DoFs brought by the non-negligible spherical wave transmission effect, existing greedy CS-based algorithms encounter two common demerits. First, additional codewords are required in the dictionary to represent the more complicated channel with two DoFs. This results in an oversized dictionary in~\cite{8949454,9693928,10195974}, and therefore imposes a more stringent requirement for dictionary storage. Second, and more importantly, 
the additional DoF in the near-field channel model under XL-MIMO systems introduces a higher rank to the channel matrix~\cite{9848802}, which increases the sparsity ratio of near-field channels. 
In this case, CS-based algorithms necessitate a larger number of pilots to obtain more observations of the less sparse near-field channel, thereby achieving a satisfactory reconstruction accuracy. 
Therefore, while the number of baseband samples per time slot is shrunk under the hybrid antenna array architecture, the drastically increased pilot length induces a prohibitively huge number of baseband samples throughout the CE procedure~\cite{9693928}. 
In summary, designing a \textit{lightweight} dictionary and achieving \textit{minimal baseband samples} in CE for near-field XL-MIMO remain open research problems.

Notably, an ideal dictionary needs to conform to the spatial characteristics of the channel in both distance and angle domains. This stipulation implies that acquiring localization information for both user equipments (UEs) and scatterers is a fundamental task in dictionary design for near-field CE~\cite{10273424,9913211}. Specifically, when \textit{prior localization information} pertaining to the propagation environment is incorporated into the dictionary, it can significantly reduce the need for further pilots and subsequent baseband samples for near-field CE. 
In recent years, integrated sensing and communication (ISAC) has made significant progress, posing sensing as a promising functionality in future communication systems~\cite{10188491}.  
Thanks to the large aperture provided by XL-MIMO arrays, efficient location estimation~\cite{9707730} as well as movement tracking~\cite{9508850} of UEs and scatterers become possible. 
This motivates us to customize a novel sensing-enhanced CE scheme for near-field XL-MIMO.

There have been some initial attempts on localization via XL-MIMO or sensing-assisted CE. 
Near-field radar sensing utilizing orthogonal frequency division multiplexing (OFDM) waveforms was validated in~\cite{9707730}, while~\cite{9508850} proposed a near-field tracking method by measuring the curvature of arrival of the spherical wavefront from the UE. Besides, a localization method based on the multiple signal classification (MUSIC) algorithm has also been proposed to decouple the estimation of angles of arrival (AoAs) and distances of UEs located in the near field~\cite{10149471}. 
However, an excessive number of additional baseband samples are required for localization when directly integrating such methods into existing CE schemes, which counteracts the reduction of samples for CE brought by localization. 
{Furthermore, a ping-pong pilot-based active-sensing method was proposed to achieve near-field beam alignment between the base station (BS) and the UE\cite{yuanwei}, where neural networks are utilized for beamforming optimization. Nevertheless, the pilot overhead and synchronization issues inherent in the alternating uplink and downlink training procedure remain challenging.} 
To further improve the utilization of pilot signals, a sensing-assisted CE scheme was then proposed in~\cite{ren2023sensingassisted}. While the pilot signals are reused for both localization and CE procedures and therefore no baseband sampling is required for localization, fully-digital XL-MIMO arrays are essentially the prerequisite of existing sensing-assisted CE. 
This array architecture requires a large number of baseband samples (typically dozens of times the number of antennas on the XL-MIMO array) via power-hungry analog-to-digital converters (ADCs) and, hence, is far from practical. 
Furthermore, the computational complexity of  classical localization algorithms, e.g., MUSIC, applied in~\cite{10149471} and \cite{ren2023sensingassisted} is exceptionally high, which is caused by massive matrix multiplications for super-resolution grid search on both distance and angle domains. 
Therefore, pragmatic CE schemes enhanced by localization with \textit{smaller sampling overhead} and \textit{lower computational complexity} still require further exploration.

More recently, a localization method based on lens antenna arrays~\cite{9314267} was proposed to alleviate the requirement of baseband samples by utilizing a novel hardware architecture, while the extremely large focal arc of the lens antenna array renders it difficult to deploy in practice. 
Despite these challenges, the expanded aperture in near-field wireless systems inspires the development of novel hardware architectures, which should aim to effectively utilize XL-MIMO for localization tasks, thereby significantly enhancing near-field CE capabilities.

\subsection{Contributions}
In this paper, we propose a sensing-enhanced CE scheme in a near-field XL-MIMO system operating in  time division duplexing (TDD) mode, whereas our contributions are summarized as follows:
\begin{itemize}
	\item We introduce a cost-effective XL-MIMO transceiver architecture, implemented with off-the-shelf power sensors, to measure the power of the received EM waves. This architecture not only enhances the sensing capability of near-field propagation environment but also ensures a practical and efficient implementation in real-world scenarios.
	\item We propose a time inversion algorithm to estimate the locations of both UEs and scatterers. Unlike the conventional localization methods~\cite{ren2023sensingassisted,8949454,10149471}, the proposed algorithm does not require any baseband samples and can be implemented by computationally-efficient fast Fourier transforms (FFTs).
	\item We exploit the estimated localization coordinates in the CE procedure to derive a novel dictionary, by exploring the eigenvalue-decomposition (EVD) of the near-field channel matrix. 
	In particular, the column-wise orthogonality between eigenvectors inherently reduces the dictionary size to a minimum. We also reveal that the codewords in the proposed dictionary admit the form of discrete prolate spheroidal sequences (DPSS).
	\item Numerical results demonstrate that the proposed time inversion algorithm achieves accurate localization without baseband sampling. In addition, the proposed DPSS-based dictionary incorporating the estimated location coordinates of the UE and scatterers achieves satisfactory accuracy while requiring a number of baseband samples that is up to 77\% smaller. 
	Besides, the required size of the proposed DPSS-based dictionary is shrunk by up to 88\% compared to the conventional DFT and spherical wave dictionaries, which leads to less stringent storage requirements. 
\end{itemize}

\subsection{Notations}

We adopt normal-face letters to denote scalars and lowercase (uppercase) boldface letters to denote column vectors (matrices). The $k$-th row vector and the $m$-th column vector of matrix ${\bf H}\in\mathbb{C}^{K\times M}$ are denoted as ${\bf H}[{k,:}]$ and ${\bf H}[{:,m}]$, respectively. The operator $\odot$, $\otimes$, and $\circledast$ denote the Hadamard product, Kronecker product, and linear convolution operators, respectively. $\{{\bf H}_n\}_{n=1}^N$ denotes a matrix set with the cardinality of $N$. ${\rm vec}(\cdot)$ denotes the vectorization operator, while ${\rm diag}({\bf h})$ constructs a diagonal matrix whose diagonal elements are extracted from ${\bf h}$. The superscripts $(\cdot)^{T}$, $(\cdot)^{\rm *}$, $(\cdot)^{H}$, and $(\cdot)^{\dagger}$ represent the transpose, conjugate, conjugate transpose, and pseudo-inverse operators, respectively. We also exploit $\lceil{\cdot}\rceil$, $\lfloor{\cdot}\rfloor$, and $\nint{\cdot}$ to denote the ceiling, flooring, and rounding operators, respectively. $\mathcal{CN}(\mu,\sigma^2)$ and $\mathcal{U}[a,b]$ represent the complex Gaussian distribution with mean $\mu$ and variance $\sigma^2$, and the uniform distribution over $[a,b]$, respectively. $\mathbb{E}[\cdot]$ denotes the statistical expectation operator. The $\ell_0$-norm of a vector, $\Vert\cdot\Vert_0$, counts the number of its non-zero elements. We use ${\rm Re}(z)$ to denote the real part of a complex number, and represent the imaginary unit as $\jmath$ with $\jmath^2=-1$. $\mathcal{O}(\cdot)$ denotes the upper bound on the worst-case number of multiplications required by an algorithm, while $\propto$ denotes that one quantity varies directly in proportion to another.

\section{System Model}
In this section, we firstly describe the considered near-field transmission scenario and present the near-field channel model. Furthermore, the uplink sensing-enhanced CE protocol is illustrated. The power sensor-based and hybrid antenna array is then introduced, 
whereas the corresponding uplink sensing and training models are further provided.
\subsection{Channel Model}
\begin{figure}[t]
	\centering
	\includegraphics[width=0.45\textwidth]{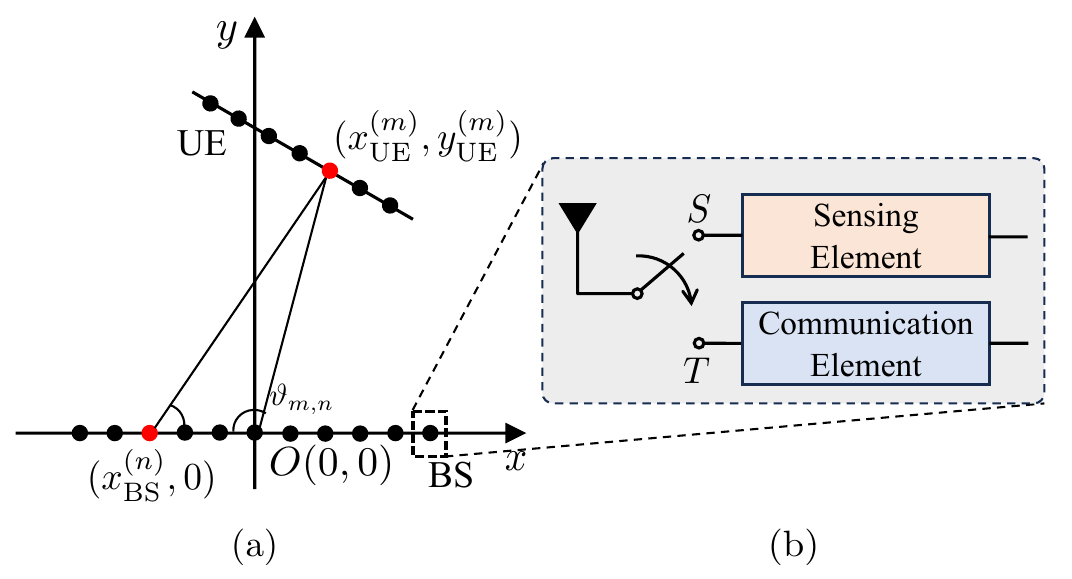}
	\caption{(a) The considered near-field transmission scenario and (b) the proposed dual-functional architecture for sensing-enhanced communication. RF switches are adopted on antenna elements in (b), which selectively activate either the $S$ branch for sensing or the $T$ branch for training.}
	\label{fig:sysmodela}
\end{figure}
Consider a scenario with a UE array\footnote{The proposed design in this paper can be extended to multi-user scenarios by assigning orthogonal pilots for different UEs. } and a BS equipped with a hybrid analog-digital architecture XL-MIMO array operating in the TDD mode within its near-field region, where both the UE and BS are equipped with uniform linear arrays (ULAs)\footnote{We consider ULA here for brevity, while it can be generalized to other antenna geometries. For example, it can be extended to uniform planar array (UPA) by applying Kronecker products to steering vectors in~\eqref{eq:LoS}.}, as shown in Fig.~\ref{fig:sysmodela}(a). The Cartesian coordinates of the $m$-th element of the UE's antenna array and the $n$-th element of the BS antenna array are denoted by ${\bf r}_{\rm UE}^{(m)}=(x_{\rm UE}^{(m)},y_{\rm UE}^{(m)})^T$ and ${\bf r}_{\rm BS}^{(n)}=(x_{\rm BS}^{(n)},0)^T$, respectively. For uni-polarized antenna elements, the spherical wave spatial impulse response in the scalar form~\cite{9723331} is given by
\begin{equation}
	{g}\left({\bf r}_{\rm BS}^{(n)},{\bf r}_{\rm UE}^{(m)}\right) = \frac{e^{-\jmath\kappa \Vert {\bf r}_{\rm UE}^{(m)}-{\bf r}_{\rm BS}^{(n)} \Vert}}{\Vert {\bf r}_{\rm UE}^{(m)}-{\bf r}_{\rm BS}^{(n)} \Vert},
	\label{eq:sv}
\end{equation}
where $\kappa = 2\pi/\lambda$ is the wavenumber, and $\lambda$ is the wavelength. For the $m$-th ($1\leq m\leq N_{\rm UE}$) antenna element of the UE array, the uplink line-of-sight (LoS) wireless channel is modeled by using the near-field steering vector as
\begin{equation}
	\begin{aligned}
		\mathbf{H}_{\rm LoS}[:,m] &= {\mathbf g}_{\rm BS}\left({\bf r}_{\rm UE}^{(m)}\right)\\&= \left[ {g}\left({\bf r}_{\rm UE}^{(m)},{\bf r}_{\rm BS}^{(1)}\right),\cdots, {g}\left({\bf r}_{\rm UE}^{(m)},{\bf r}_{\rm BS}^{(N_{\rm BS})}\right) \right]^{T},
	\end{aligned}
	\label{eq:LoS}
\end{equation}
where ${\mathbf g}_{\rm BS}(\cdot)$ is the near-field steering vector from the $m$-th antenna of the UE array to the BS array, and $N_{\rm BS}$ and $N_{\rm UE}$ denote the numbers of antennas at the BS and UE, respectively. Considering multipath fading channel model~\cite{10220205}, the non-LoS (NLoS) components are modeled as
\begin{equation}
	{\bf H}_{\rm NLoS} = \sum_{\ell=1}^{L} \alpha_\ell {\bf g}_{\rm BS}\left( {\bf r}_{\rm S}^{(\ell)} \right) {\bf g}^{T}_{\rm UE}\left( {\bf r}_{\rm S}^{(\ell)} \right),
	\label{eq:NLoS}
\end{equation}
where ${\bf g}_{\rm UE}\left(\cdot \right)$ has a similar form to ${\bf g}_{\rm BS}\left(\cdot \right)$ in~\eqref{eq:LoS}, ${\bf r}_{\rm S}^{(\ell)}=({x}_{\rm S}^{(\ell)},{y}_{\rm S}^{(\ell)})^T$ denotes the location of the $\ell$-th scatterer, $L$ denotes the number of multipaths, and $\alpha_\ell\sim\mathcal{CN}(0,1/L)$ for all $1\leq \ell\leq L$ denotes the complex channel attenuation. The overall uplink channel model is then expressed as
\begin{equation}
	{\bf H} = \mathbf{H}_{\rm LoS} + \mathbf{H}_{\rm NLoS}.
	\label{eq:channelmodel}
\end{equation}

From \eqref{eq:sv}, \eqref{eq:LoS}, and Fig.~\ref{fig:sysmodela}(a), it can be determined that each element in the near-field steering vector is given by
\begin{equation}
	g\left({\bf r}_{\rm BS}^{(n)},{\bf r}_{\rm UE}^{(m)}\right) =\frac{e^{-\jmath \kappa\sqrt{\Vert{\bf r}_{\rm BS}^{(n)}\Vert^2+\Vert{\bf r}_{\rm UE}^{(m)}\Vert^2-2\Vert{\bf r}_{\rm BS}^{(n)}\Vert\Vert{\bf r}_{\rm UE}^{(m)}\Vert\cos(\vartheta_{m,n})}}}{\Vert{\bf r}_{\rm BS}^{(n)}-{\bf r}_{\rm UE}^{(m)}\Vert},
\end{equation}
where the information in both distance and angular domains is required. Note that this is the main difference between the near-field channel model and the conventional far-field counterpart, where only angular information is decisive~\cite{10217152}. Hence, the inclusion of these additional parameters related to distance and its coupling with the angular information introduces heightened complexity and difficulty in CE problems.
\subsection{Sensing-Enhanced CE Protocol and Power Senor-based XL-MIMO}
\label{sec:arch}
We consider the uplink CE at the BS, which can be used to facilitate the downlink transmission. Conventionally, 
the UE transmits $\tau$ uplink pilot signals during the training stage, while the BS equipped with hybrid analog-digital architecture estimates the channel based on the received pilots~\cite{6756990,9693928}. However, the pilot length $\tau$ and the resulting number of baseband samples at the BS can be exceedingly large due to limited RF chains in the hybrid array architecture and the complicated near-field channel model~\cite{10273424,10143629}. 

In this paper, we propose a new sensing-enhanced CE protocol, where a single sensing time slot is inserted before the conventional pilot signal transmission, as shown in Fig.~\ref{fig:protocol}. {In particular, in this additional slot, the RF switch is set to the $S$ branch (cf. Fig.~\ref{fig:sysmodela}(b)), enabling the UE to transmit an uplink sensing signal to realize the localization of both the UE and scatterers. In the subsequent $\tau$ slots, the RF switch is then set to the $T$ branch, where the UE transmits uplink pilots to facilitate CE.} The main benefit of adding such a sensing time slot is that the estimated location coordinates can provide accurate \textit{prior localization information}, thereby facilitating efficient near-field CE. {This approach effectively reduces the pilot length $\tau$ in the subsequent training stage}.

\begin{figure}[t]
	\centering
	\includegraphics[width=0.275\textwidth]{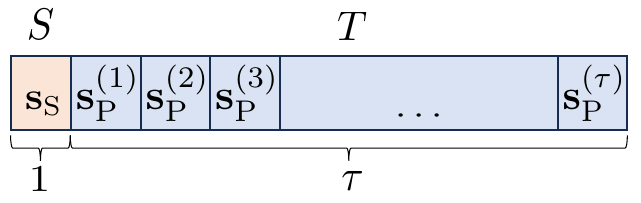}
	\caption{Proposed sensing-enhanced uplink CE protocol. One sensing signal $\mathbf{s}_\mathrm{S}$ is transmitted in the sensing ($S$) stage and $\tau$ pilot signals $\mathbf{s}_\mathrm{P}^{(t)}$ are transmitted subsequently in the training ($T$) stage.}
	\label{fig:protocol}
	\vspace{-3mm}
\end{figure}
\begin{figure}[t]
	\centering
	\includegraphics[width=0.35\textwidth]{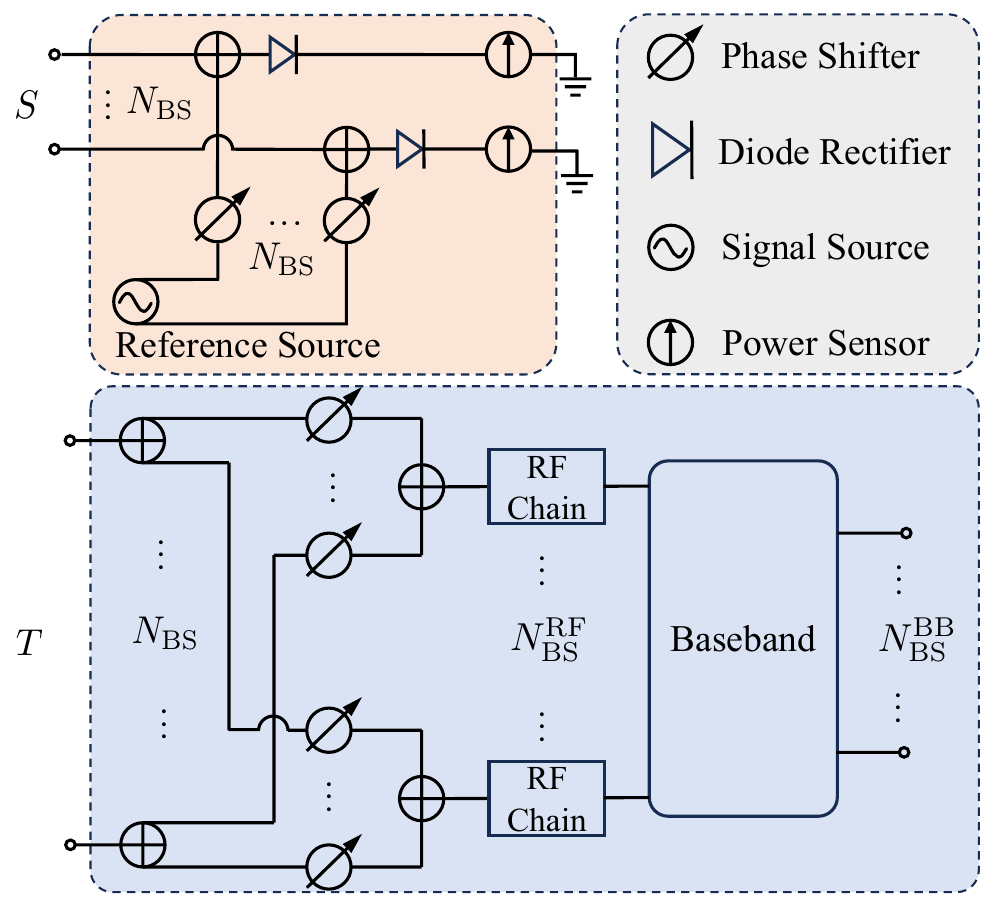}
	\caption{The detailed hardware schematic of antenna elements at the BS array~\cite{10042005,942570}.}
	\label{fig:sysmodelb}
\end{figure}

{However, it should be noted that conventional localization techniques generally necessitate extra baseband samples that are often several times the size of antenna arrays~\cite{ren2023sensingassisted}. In fact, each baseband sample is generated through an RF front end with a high-precision ADC, whose power consumption increases exponentially with the number of quantization bits, and can reach up to the watt level in XL-MIMO systems~\cite{4684631}. This forms the first obstacle for practical localization methods in the baseband. Moreover, in the widely considered hybrid analog-digital array architecture~\cite{6717211}, the number of RF chains and the corresponding ADCs is much fewer than that of antenna elements. Therefore, more time slots are needed to obtain a sufficient number of baseband measurements for accurate localization, which leaves significantly fewer time slots available for subsequent CE.

To address this problem, in this paper we propose a power sensor-based localization scheme in the RF domain. It can be accomplished in \textit{only one} sensing slot and promote efficient near-field CE.} As shown in Fig.~\ref{fig:sysmodelb}, the sensing part, composed of off-the-shelf hardware components including phase shifters, combiners, local oscillator, and power sensors, distinguishing it from other setups~\cite{MaHongGaoJingLiuBaiChengCui}. 
{Rather than relying on energy-consuming and costly high-frequency ADCs and baseband samplers, this approach utilizes power sensors to measure the power pattern of the received wave combined with the phase-shifted reference wave in an energy-efficient (down to milliwatt level) and economical manner~\cite{9769720,7448359}. }

A similar architecture has been extensively utilized in holographic transceivers for microwave sensing~\cite{942570,tagoram} and imaging~\cite{9246248} within the near-field region, primarily due to its low-cost and plug-and-play characteristics. 
Recently, an initial attempt to incorporate power sensor-based antenna elements into wireless communications was made in~\cite{10042005} for RIS-assisted systems, where efficient parameter estimation and beamforming algorithms were developed. In this paper, we shall also leverage this practical hardware architecture, based on which baseband sampling-free localization is designed.

\subsection{Sensing Model}
This subsection considers the first time slot when the RF switches equipped at the BS are set to the $S$ branch, such that the sensing function is activated. Based on the uplink sensing signal transmitted from the UE, the received object wave at the $n$-th antenna element of the BS is described as
\begin{align}
	{\bf e}_{\rm obj}[n] &\!=\!\sqrt{\frac{1}{N_{\rm UE}}}\sum_{m=1}^{N_{\rm UE}} {g}\!\left( {\bf r}_{\rm UE}^{(m)},{\bf r}_{\rm BS}^{(n)} \right){\bf s}_{\rm S}[m] \label{eq:obj}\\&{~~~}\!+\! \sum_{\ell=1}^{L}\!\sum_{m=1}^{N_{\rm UE}}\!\!\alpha_\ell {g}\!\left( {\bf r}_{\rm S}^{(\ell)},{\bf r}_{\rm BS}^{(n)} \right)\! {g}\!\left( {\bf r}_{\rm S}^{(\ell)},{\bf r}_{\rm UE}^{(m)} \right){\bf s}_{\rm S}[m]\!+\!{\bf e}_{\rm n}[n]\notag,
\end{align}
where ${\bf s}_{\rm S}\in\mathbb{C}^{N_{\rm UE}}$ denotes the sensing signal, and ${\bf e}_{\rm n}\in\mathbb{C}^{N_{\rm BS}}$ is the additive white Gaussian noise (AWGN) vector. 
Without loss of generality, we consider the case when ${\bf s}_{\rm S}$ is an all-one vector~\cite{9314267}. 
As depicted in Fig.~\ref{fig:sysmodelb}, the received object waves are then combined with the reference wave as ${\bf e}_{\rm ref}[n] + {\bf e}_{\rm obj}[n]$, where ${\bf e}_{\rm ref} = A [e^{\jmath\xi_1},\cdots,e^{\jmath\xi_{N_{\rm BS}}}]^T$ is the phase-shifted signal from the local oscillator with $A$ denoting the manipulated amplitude of the reference wave and $\xi_{n}$ denoting the phase shift at the $n$-th antenna element of the BS array. Then, the $n$-th power sensor~\cite{10042005,942570} deployed at the XL-MIMO transceiver measures the power pattern of combined signals as
\begin{equation}
	\begin{aligned}
		{\bf p}[n] &= \left\vert  {\bf e}_{\rm ref}[n]+{\bf e}_{\rm obj}[n] \right\vert^2 \\&= \left\vert  {\bf e}_{\rm ref}[n] \right\vert^2+\left\vert {\bf e}_{\rm obj}[n] \right\vert^2 +{\bf e}_{\rm ref}^{*}[n]{\bf e}_{\rm obj}[n] \\&~~~+ {\bf e}_{\rm ref}[n]{\bf e}^{*}_{\rm obj}[n],
	\end{aligned}
	\label{eq:pwrrecord}
\end{equation}
from which we desire to recover the location information $\{{\bf r}_{\rm S}^{(\ell)} \}_{\ell=1}^L$ in ${\bf e}_{\rm obj}[n]$.
\subsection{Pilot Training Model}
Next, we consider the subsequent uplink pilot training stage, when the UE transmits the remaining pilot signals ${\bf s}_{\rm P}^{(t)}\in\mathbb{C}^{N_{\rm UE}^{\rm S}}$ ($1\!\leq\! t\!\leq\! \tau$) for CE. 
To reduce the number of pricey and power-hungry RF chains in XL-MIMO systems, we consider hybrid transceiver architectures~\cite{9693928,10195974,10217152} for both the BS and UE. 
In this regard, the received signal at the BS in the $t$-th time slot is expressed as
\begin{equation}
	{\bf y}^{(t)} = \left( {\bf W}_{\rm RF}^{(t)}{\bf W}_{\rm BB}^{(t)} \right)^{H}\left( {\bf H} {\bf F}_{\rm RF}^{(t)}{\bf F}_{\rm BB}^{(t)}{\bf s}_{\rm P}^{(t)} + {\bf n}^{(t)} \right),
	\label{eq:dltrain}
\end{equation}
where ${\bf W}_{\rm RF}^{(t)}\in\mathbb{C}^{N_{\rm BS}\times N^{\rm RF}_{\rm BS}}$ and ${\bf W}_{\rm BB}^{(t)}\in\mathbb{C}^{N^{\rm RF}_{\rm BS}\times N^{\rm S}_{\rm BS}}$ denote the analog and digital combining matrices, and ${\bf F}_{\rm RF}^{(t)}\in\mathbb{C}^{N_{\rm UE}\times N^{\rm RF}_{\rm UE}}$ and ${\bf F}_{\rm BB}^{(t)}\in\mathbb{C}^{N^{\rm RF}_{\rm UE}\times N^{\rm S}_{\rm UE}}$ denote the analog and digital precoding matrices, respectively. Here, $N_{\rm UE}^{\rm RF}$ ($N_{\rm BS}^{\rm RF}$) and $N_{\rm UE}^{\rm S}$ ($N_{\rm BS}^{\rm S}$) denote the numbers of RF chains and data streams at the UE (BS), respectively, ${\bf n}^{(t)}\sim\mathcal{CN}(0,\sigma_{\rm n}^2 {\bf I})$ is the AWGN vector, and $\sigma_{\rm n}^2$ is the power of the AWGN at each receive antenna.

\section{Problem Formulation}
\label{sec:formulation}
By denoting ${\bf W}^{(t)} = ({\bf W}_{\rm RF}^{(t)}{\bf W}_{\rm BB}^{(t)})^H$ and ${\bf f}^{(t)} ={\bf F}_{\rm RF}^{(t)}{\bf F}_{\rm BB}^{(t)}{\bf s}^{(t)}$ for notational brevity, the signal model in \eqref{eq:dltrain} can be rewritten as ${\bf y}^{(t)} = \left( ({\bf f}^{(t)})^{T}\otimes {\bf W}^{(t)} \right){\rm vec}({\bf H})+\tilde{\bf n}^{(t)}$, where $\tilde{\bf n}^{(t)} = {\bf W}^{(t)}{\bf n}^{(t)}$. Stacking $\tau$ training slots together, we obtain
\begin{equation}
	{\bf y} = {\boldsymbol{\Phi}{\bf h}}+\tilde{\bf n},
	\label{eq:linearproblem}
\end{equation}
where ${\bf y} = [({\bf y}^{(1)})^{H},\cdots,({\bf y}^{(\tau)})^{H}]^{H}$ is the overall received signal, $\boldsymbol{\Phi} = [(({\bf f}^{(1)})^{T}\otimes {\bf W}^{(1)})^{H},\cdots, (({\bf f}^{(\tau)})^{T}\otimes {\bf W}^{(\tau)})^{H}]^{H}$ is the measurement matrix, and ${\bf h} = {\rm vec}({\bf H})$ is the vectorized downlink channel vector. Estimating $\bf h$ in~\eqref{eq:linearproblem} via linear methods requires excessive training overhead $\tau\geq N_{\rm BS}N_{\rm UE}$, which is infeasible in XL-MIMO systems. In light of this, CS-based reconstruction methods were proposed to fully utilize the intrinsic sparsity of ${\bf H}$~\cite{1614066}, for which the sparse reconstruction problem is formulated as
\begin{equation}
	\begin{aligned}
		{\rm(P1)}\quad\underset{\tilde{\bf h}}{\min}~\Vert \tilde{\bf h} \Vert_0,
		\quad\quad
		{\rm s.t.}~\Vert\boldsymbol{\Phi} \boldsymbol{\Psi} \tilde{\bf h}-\mathbf{y}\Vert_2 \leq \varepsilon,
	\end{aligned}
\end{equation}
where $\tilde{\bf h}$ is the sparse support vector to be estimated, $\varepsilon$ is the maximum tolerable error bound, and ${\boldsymbol{\Psi}}$ is the dictionary matrix. 

\begin{remark}
	Recall that a desirable dictionary $\boldsymbol{\Psi}$ has to satisfy the following properties. 
	First, the dictionary should match the signal model of ${\bf h}$ in~\eqref{eq:linearproblem} to capture inherent features and efficiently sparsify the channel vector as $\tilde{\bf h}$.
	Second, the dictionary should contain as many accurate spatial characteristics as possible, e.g., the locations of the UE and scatterers in the near-field region. In this way, more prior knowledge of the propagation environment included in the dictionary leads to fewer unknown parameters to be estimated, thereby necessitating a shorter pilot length and fewer ensuing baseband samples. Third, the dictionary should be small in size to save storage, thus reducing the computational and memory requirements for real-time processing. 
	To find such a dictionary with minimized size for reducing the implementation complexity, in Sections~\ref{sec:coord_est} and~\ref{sec:ce}, 
	we propose a novel baseband sampling-free localization and an orthogonal dictionary design, respectively.
\end{remark}

\section{Time Inversion Localization Algorithm}
\label{sec:coord_est}
Inspired by holography, which is able to record high-dimensional information with low-dimensional coherent wavefront recordings~\cite{PhysRevLett.118.183901}, we propose to estimate the locations of the UE and scatterers from the measured power pattern in~\eqref{eq:pwrrecord}. In this section, we develop a time inversion algorithm, and prove its feasibility to determine the location coordinates. We further derive that the proposed algorithm can be approximated in a convolution form, which can be implemented by FFT to effectively reduce the computational complexity. Finally, we propose a multipath localization method that iteratively estimates the locations of the UE and scatterers in the near-field region.

\subsection{Time Inversion Near-Field Reconstruction}
In this subsection, we propose a novel method for reconstructing the scattering environment in the near field, focusing on the precise localization of the UE and scatterers. The proposed method is based on the principle that by mathematically emulating the reverse propagation, we can trace the paths of scattered EM waves back to their points of origin, resembling a time inversion, and thereby reconstructing the positions of the UE and scatterers. 

{Since the power of the reference signal {${\bf e}_{\rm ref}$} itself in~\eqref{eq:pwrrecord} contains no information about the UE and scatterers}, we firstly subtract it and define the \textit{hologram} ${\bm \hbar}\in\mathbb{R}^{N_{\rm BS}}$~\cite{Miller} from the power measurement as
\begingroup
\allowdisplaybreaks
\begin{align}
	{\bf p}[n] - \left\vert  {\bf e}_{\rm ref}[n] \right\vert^2 &= {\bf e}_{\rm ref}^{*}[n]{\bf e}_{\rm obj}[n] + {\bf e}_{\rm ref}[n]{\bf e}^{*}_{\rm obj}[n]+ \left\vert {\bf e}_{\rm obj}[n] \right\vert^2\notag \\
	&\gtrsim {\bf e}_{\rm ref}^{*}[n]{\bf e}_{\rm obj}[n] + {\bf e}_{\rm ref}[n]{\bf e}^{*}_{\rm obj}[n]\notag\\
	&\triangleq {\bm \hbar}[n] = 2{\rm Re}\left( {\bf e}_{\rm ref}[n]{\bf e}_{\rm obj}[n] \right),\label{eq:hologram}
\end{align}
\endgroup
where the object power pattern $\left\vert {\bf e}_{\rm obj}[n] \right\vert^2\ll \left\vert  {\bf e}_{\rm ref}[n] \right\vert^2$ is neglected\footnote{Note that the power of the reference wave can be artificially controlled and can be obtained precisely during the calibration procedure. On the other hand, the power of the object wave is inversely proportional to the square of the distance between UE and BS arrays, and is far lower than that of the power of the reference wave.}. Note that the phase information of the object wave was discarded in~\eqref{eq:hologram}. To address this issue, we multiply the hologram ${\bm \hbar}$ by the coherent reference wave as
\begin{equation}
	\overline{{\bm \hbar}}[n] = {\bm \hbar}[n] {\bf e}_{\rm ref}[n] =  A^2 {\bf e}_{\rm obj}[n] + \left( {\bf e}_{\rm ref}[n] \right)^2{\bf e}^{*}_{\rm obj}[n],
	\label{eq:holoref}
\end{equation}
where the phase information of the object wave is restored. 

Then, we can emulate the time inversion EM wave propagation back to any given coordinate\footnote{It is a near-field generalization of the matched filter method, which degenerates into the matched filter method in the far-field region.} ${\bf r}=(x,y)$ by constructing the following function
\begin{equation}
	{ E}({\bf r}) = \sum_{n=1}^{N_{\rm BS}} \overline{{\bm \hbar}}[n] g^{*}\left({\bf r}_{\rm BS}^{(n)},{\bf r}\right)=\sum_{n=1}^{N_{\rm BS}} \overline{{\bm \hbar}}[n]\frac{e^{\jmath\kappa \Vert {\bf r}-{\bf r}_{\rm BS}^{(n)} \Vert}}{\Vert {\bf r}-{\bf r}_{\rm BS}^{(n)} \Vert},
\end{equation}
where the conjugate spatial response $g^{*}(\cdot)$ signifies the propagation in the inverted direction. Note that the denominator of $g(\cdot)$ increases with the distance $\Vert{\bf r}-{\bf r}_{\rm BS}^{(n)}\Vert$, which scales the amplitude of the emulated wave ${ E}({\bf r})$. In this case, the UE or scatterers with larger distances from the BS will be overshadowed due to tiny values of ${ E}({\bf r})$, referred to as the \emph{near-far effect}, causing false estimations especially in the near-field region. We hence compensate the emulated wave as
\begin{equation}
	\tilde{ E}({\bf r}) = \sum_{n=1}^{N_{\rm BS}} \overline{{\bm \hbar}}[n]  g^{*}\left({\bf r}_{\rm BS}^{(n)},{\bf r}\right)\Vert{\bf r}-{\bf r}_{\rm BS}^{(n)}\Vert
	\label{eq:timeinv_compensate}
\end{equation}
to eliminate the near-far effect. 

In other words, we construct the compensated wave via~\eqref{eq:timeinv_compensate} for localization after we receive ${\bf p}[n]$ at the BS. Next, we analyze the properties of the compensated wave $\tilde{E}({\bf r})$. We first rewrite the object wave in~\eqref{eq:obj} as
\begin{equation}
	\begin{aligned}
		&~{\bf e}_{\rm obj}[n]\\
		\approx&~\overline{\alpha}_0g\left( {\overline{\bf r}}_{\rm UE},{\bf r}_{\rm BS}^{(n)}\right) + \sum_{\ell=1}^{L} \overline{\alpha}_\ell {g}\left( {\bf r}_{\rm S}^{(\ell)},{\bf r}_{\rm BS}^{(n)} \right)\triangleq \overline{\bf e}_{\rm obj}[n],
		\label{eq:obj2}
	\end{aligned}
\end{equation}
where ${\overline{\bf r}}_{\rm UE}$ denotes the centroid coordinate\footnote{The aperture of a UE array is typically small in the near-field region. For instance, a $28\,$GHz $8$-antenna ULA UE occupies $3.7\,$cm. Hence, the UE array can be treated as a point during the sensing stage.} of the UE array, $\overline{\alpha}_0$ and $\overline{\alpha}_\ell = \alpha_\ell \sum_{m=1}^{N_{\rm UE}} {g} ( {\bf r}_{\rm S}^{(\ell)},{\bf r}_{\rm UE}^{(m)})$ are the effective channel attenuation for the LoS and NLoS paths, respectively. By defining ${\bf r}_{\rm S}^{(0)} = {\overline{\bf r}}_{\rm UE}$,~\eqref{eq:obj2} can be recast in a compact form as
\begin{equation}
	\overline{\bf e}_{\rm obj}[n] =\sum_{\ell=0}^{L} \overline{\alpha}_\ell  g\left( {\bf r}^{(\ell)}_{\rm S},{\bf r}_{\rm BS}^{(n)} \right).
	\label{eq:obj3}
\end{equation}
Substituting~\eqref{eq:obj3} into~\eqref{eq:timeinv_compensate}, we have
\begingroup
\allowdisplaybreaks
\begin{align}
	\tilde{E}({\bf r}) 
	={} & \sum_{n=1}^{N_{\rm BS}}  \left(\underbrace{\left\vert {\bf e}_{\rm ref}[n] \right\vert^2 \overline{\bf e}_{\rm obj}[n] g^{*}\left({\bf r}_{\rm BS}^{(n)},{\bf r}\right)}_{\rm desired} \right.\notag\\&\left.+ \underbrace{\left( {\bf e}_{\rm ref}[n] \right)^2  \overline{\bf e}_{\rm obj}^{*}[n] g^{*}\left({\bf r}_{\rm BS}^{(n)},{\bf r}\right)}_{\rm interference} \right)\Vert{\bf r}-{\bf r}_{\rm BS}^{(n)}\Vert\notag\\
	={} 
	& A^2 \sum_{\ell=0}^{L} \overline{\alpha}_\ell\left( \sum_{n=1}^{N_{\rm BS}}\frac{ e^{-\jmath\kappa\left( \Vert {\bf r}_{\rm S}^{(\ell)}-{\bf r}_{\rm BS}^{(n)} \Vert- \Vert {\bf r}-{\bf r}_{\rm BS}^{(n)} \Vert\right)}}{ \Vert {\bf r}_{\rm S}^{(\ell)}-{\bf r}_{\rm BS}^{(n)} \Vert } \right.\notag\\
	&+ \left.\frac{  e^{-2\jmath\xi_n -2\jmath \angle \overline{\alpha}_\ell +\jmath\kappa\left( \Vert {\bf r}_{\rm S}^{(\ell)}-{\bf r}_{\rm BS}^{(n)} \Vert+ \Vert {\bf r}-{\bf r}_{\rm BS}^{(n)} \Vert\right)}}{ \Vert {\bf r}_{\rm S}^{(\ell)}-{\bf r}_{\rm BS}^{(n)} \Vert } \right)\notag\\
	& + \sum_{n=1}^{N_{\rm BS}} 2{\rm Re}\left( {\bf e}_{\rm n}[n] e^{\jmath\xi_n} \right)\frac{  e^{\jmath\kappa  \Vert {\bf r}-{\bf r}_{\rm BS}^{(n)} \Vert}}{ \Vert {\bf r}_{\rm S}^{(\ell)}-{\bf r}_{\rm BS}^{(n)} \Vert }
	\notag\\
	={}&A^2 \sum_{\ell=0}^{L} \overline{\alpha}_\ell \left(  \tilde{E}_\ell^{\rm d}({\bf r}) + \tilde{E}_\ell^{\rm i}({\bf r})  \right)+\tilde{E}^{\rm n}({\bf r}),\label{eq:compensated_rec}
\end{align}
\endgroup
where 
\begin{align}
	\tilde{E}_\ell^{\rm d}({\bf r}) &= \sum_{n=1}^{N_{\rm BS}} \frac{ e^{-\jmath\kappa\left( \Vert {\bf r}_{\rm S}^{(\ell)}-{\bf r}_{\rm BS}^{(n)} \Vert- \Vert {\bf r}-{\bf r}_{\rm BS}^{(n)} \Vert\right)}}{ \Vert {\bf r}_{\rm S}^{(\ell)}-{\bf r}_{\rm BS}^{(n)} \Vert }, \\
	\tilde{E}_\ell^{\rm i}({\bf r}) &= \sum_{n=1}^{N_{\rm BS}} \frac{  e^{-2\jmath\xi_n +\jmath\kappa\left( \Vert {\bf r}_{\rm S}^{(\ell)}-{\bf r}_{\rm BS}^{(n)} \Vert+ \Vert {\bf r}-{\bf r}_{\rm BS}^{(n)} \Vert\right)}}{ \Vert {\bf r}_{\rm S}^{(\ell)}-{\bf r}_{\rm BS}^{(n)} \Vert },
\end{align}
and 
\begin{equation}
	\tilde{E}^{\rm n}({\bf r}) = \sum_{n=1}^{N_{\rm BS}} 2{\rm Re}\left( {\bf e}_{\rm n}[n]  \right)\frac{  e^{-\jmath\xi_n +\jmath\kappa  \Vert {\bf r}-{\bf r}_{\rm BS}^{(n)} \Vert}}{ \Vert {\bf r}_{\rm S}^{(\ell)}-{\bf r}_{\rm BS}^{(n)} \Vert }~~~~
\end{equation}
denote the desired term that contributes to the reconstruction, the interference term introduced by the power sampling procedure~\eqref{eq:pwrrecord} of the $\ell$-th path, and the noise term, respectively. 
\begin{prop}
	\label{prop:one_max}
	There exists only one unique location vector ${\bf r}$ that maximizes the power of the desired term $\tilde{E}_\ell^{\rm d}({\bf r})$ as
	\begin{equation}
		{\bf r} = {\bf r}_{\rm S}^{(\ell)}.
	\end{equation}
\end{prop}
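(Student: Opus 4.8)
The plan is to treat $\tilde E_\ell^{\rm d}(\mathbf r)$ as a sum of $N_{\rm BS}$ complex phasors and to bound its magnitude by the triangle inequality, showing that the bound is attained precisely at $\mathbf r = \mathbf r_{\rm S}^{(\ell)}$. Write $d_n^{(\ell)} = \|\mathbf r_{\rm S}^{(\ell)} - \mathbf r_{\rm BS}^{(n)}\|$ and $d_n(\mathbf r) = \|\mathbf r - \mathbf r_{\rm BS}^{(n)}\|$. The $n$-th summand then has modulus $1/d_n^{(\ell)}$, which is independent of $\mathbf r$, and phase $\kappa\big(d_n(\mathbf r) - d_n^{(\ell)}\big)$. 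Consequently $|\tilde E_\ell^{\rm d}(\mathbf r)| \le \sum_{n=1}^{N_{\rm BS}} 1/d_n^{(\ell)}$, and because the right-hand side equals the value $\tilde E_\ell^{\rm d}(\mathbf r_{\rm S}^{(\ell)})$ obtained when every exponent vanishes, $\mathbf r_{\rm S}^{(\ell)}$ is a global maximizer of the power $|\tilde E_\ell^{\rm d}(\mathbf r)|^2$.

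For uniqueness I would invoke the equality condition of the triangle inequality: the bound is met only when all $N_{\rm BS}$ phasors are aligned, i.e., the phases $\kappa\big(d_n(\mathbf r) - d_n^{(\ell)}\big)$ coincide for every $n$. Consider first the natural branch in which each phase vanishes, giving the equidistance condition $d_n(\mathbf r) = d_n^{(\ell)}$ for all $n$, so that $\mathbf r$ is as far from every BS element as $\mathbf r_{\rm S}^{(\ell)}$ is. Substituting the ULA geometry $\mathbf r_{\rm BS}^{(n)} = (x_{\rm BS}^{(n)},0)^T$ and squaring yields, for each $n$, a relation that is affine in $x_{\rm BS}^{(n)}$; since the $x_{\rm BS}^{(n)}$ are distinct, matching the coefficient of $x_{\rm BS}^{(n)}$ forces $x = x_{\rm S}^{(\ell)}$, after which the constant term forces $y = \pm y_{\rm S}^{(\ell)}$. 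The reflected point $(x_{\rm S}^{(\ell)}, -y_{\rm S}^{(\ell)})$ is excluded because the UE and all scatterers lie in the half-plane in front of the array, leaving $\mathbf r = \mathbf r_{\rm S}^{(\ell)}$ as the only candidate on this branch.

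The main obstacle is the modular nature of the alignment condition: the phasors need only agree modulo $2\pi$, not vanish identically, so a priori there could be spurious maximizers satisfying $\kappa\big(d_n(\mathbf r) - d_n^{(\ell)}\big) \equiv \phi \pmod{2\pi}$ for a common $\phi$ and $n$-dependent integer offsets, i.e., grating-lobe-type ambiguities. I would rule these out by exploiting the sub-wavelength spacing of the XL-MIMO aperture and restricting the search to the near-field region of interest, under which the vanishing-phase solution is the only phase-coherent point within the field of view; the convolution/FFT reformulation of $\tilde E$ derived subsequently makes this single-dominant-peak behavior explicit. Verifying that no such ambiguity survives for the given array parameters is the delicate step, whereas the triangle-inequality bound and the equidistance computation are routine.
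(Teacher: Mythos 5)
Your proof takes essentially the same route as the paper: the paper bounds $|\tilde E_\ell^{\rm d}(\mathbf r)|^2$ via Cauchy--Schwarz, which for these $\mathbf r$-independent moduli yields exactly your triangle-inequality bound $\left(\sum_{n=1}^{N_{\rm BS}} 1/\|\mathbf r_{\rm S}^{(\ell)}-\mathbf r_{\rm BS}^{(n)}\|\right)^2$ and the same phase-alignment equality condition. If anything you are more careful on uniqueness: the paper simply asserts that equality holds if and only if $\mathbf r=\mathbf r_{\rm S}^{(\ell)}$, whereas you work out the equidistance branch from the ULA geometry and explicitly flag the residual modulo-$2\pi$ (grating-lobe) ambiguity, which the paper's own proof leaves unaddressed.
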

\begin{proof}
	For the desired term $\tilde{E}_\ell^{\rm d}({\bf r})$ in~\eqref{eq:compensated_rec}, we can reformulate the power using Cauchy–Schwarz inequality as
	\begin{align}
		\left\vert\tilde{E}^{\rm d}_\ell({\bf r})\right\vert^2&= \left\vert \sum_{n=1}^{N_{\rm BS}} \frac{e^{-\jmath\kappa \Vert {\bf r}_{\rm S}^{(\ell)}-{\bf r}_{\rm BS}^{(n)} \Vert}e^{ \jmath\kappa \Vert {\bf r}-{\bf r}_{\rm BS}^{(n)} \Vert}}{\Vert {\bf r}_{\rm S}^{(\ell)}-{\bf r}_{\rm BS}^{(n)} \Vert} \right\vert^2 \label{eq:rec_pwr}\\
		&\overset{(a)}{\leq} \sum_{n=1}^{N_{\rm BS}}  \frac{\left\vert e^{-\jmath\kappa \Vert {\bf r}_{\rm S}^{(m)}-{\bf r}_{\rm BS}^{(n)} \Vert}\right\vert^2 }{\Vert {\bf r}_{\rm S}^{(\ell)}-{\bf r}_{\rm BS}^{(n)} \Vert} \sum_{n=1}^{N_{\rm BS}} \frac{ \left\vert e^{ \jmath\kappa \Vert {\bf r}-{\bf r}_{\rm BS}^{(n)} \Vert}\right\vert^2}{\Vert {\bf r}_{\rm S}^{(\ell)}-{\bf r}_{\rm BS}^{(n)} \Vert}.\notag
	\end{align}
	The equality of $(a)$ holds if and only if ${\bf r} = {\bf r}_{\rm S}^{(\ell)}$, which indicates that $\vert \tilde{E}^{\rm d}_\ell({\bf r}) \vert$ reaches its global maximum only at position $ {\bf r}_{\rm S}^{(\ell)}$.
\end{proof}
\begin{figure}[t]
	\centering\setcounter{figure}{3}
	\begin{minipage}[t]{0.485\linewidth}
		\centering
		\includegraphics[height=0.165\textheight]{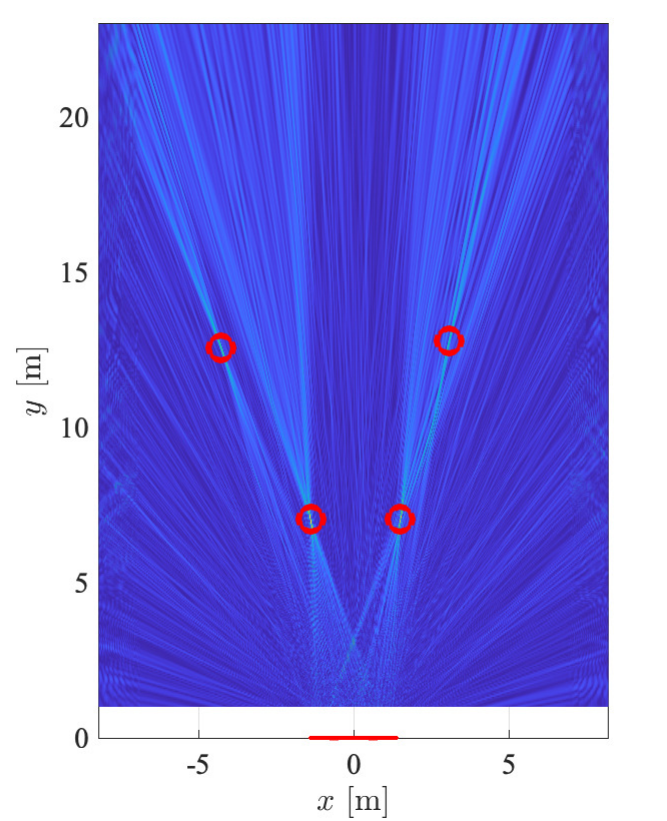}
	\end{minipage}
	\begin{minipage}[t]{0.485\linewidth}
		\centering
		\includegraphics[height=0.165\textheight]{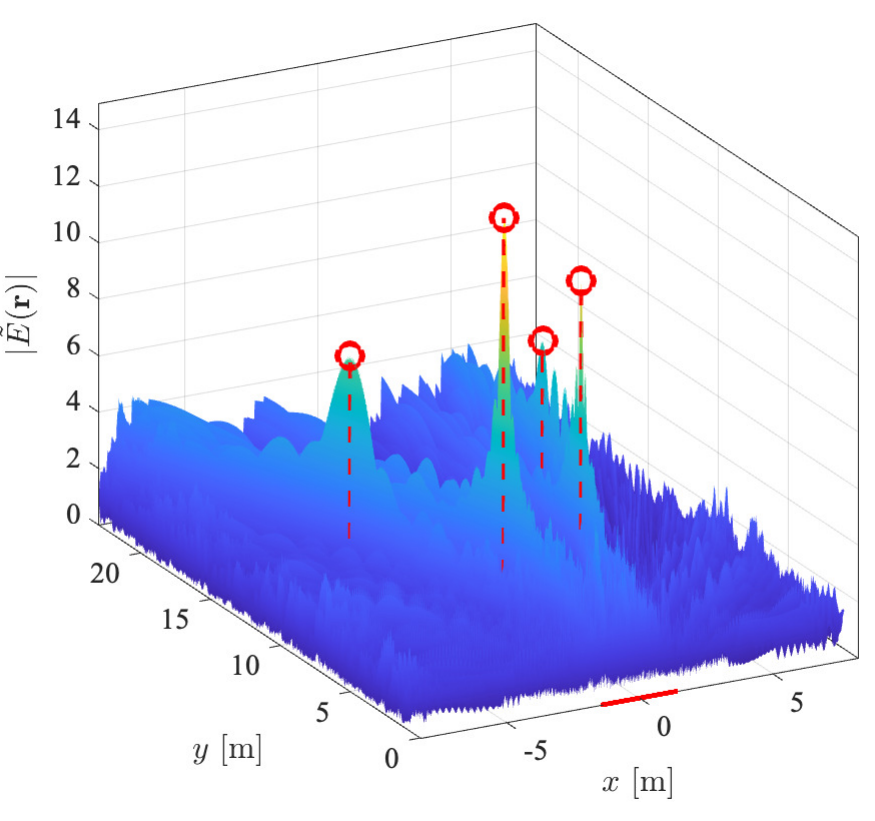}
	\end{minipage}
	\caption{A demonstration of the proposed localization method with $N_{\rm BS}=512$ antenna elements and $L=3$ NLoS paths. The localization is operated with a carrier frequency of $f_c = 28$\:GHz. The red line on the $x$-axis denotes the aperture of XL-MIMO array, while the circles mark the coordinates of the detected multiple sources.}
	\label{fig:rec_example}
\end{figure}
\begin{prop}
	\label{prop:RiemannLebesgue}
	As the wavenumber $\kappa \rightarrow \infty$ and the number of antennas at BS side $N_{\rm BS} \rightarrow \infty$, the desired term vanishes everywhere except the point specified in Proposition~\ref{prop:one_max}:
	\begin{align}
		\lim_{\begin{subarray}{c}\kappa \to {\infty} \\ N_{\rm BS} \to {\infty}\end{subarray}} \!\!\left\vert \tilde{E}_\ell^{\rm d}({\bf r}) \right\vert \!=\! \begin{cases}
			\displaystyle
			\sum_{n=1}^{N_{\rm BS}}\frac{1}{ d^{(\ell,n)} }\!\!\!\!& \text{if}~{\bf r} = {\bf r}_{\rm S}^{(\ell)}\!,\\
			0 & \text{otherwise},
		\end{cases}
		\label{eq:desvanish}
	\end{align}
	where $d^{(\ell,n)} = \Vert {\bf r}_{\rm S}^{(\ell)}-{\bf r}_{\rm BS}^{(n)} \Vert$, and the vanishing effect also holds for both the interference and noise terms as
	\begin{align}
		\lim_{\begin{subarray}{c}\kappa \to {\infty} \\ N_{\rm BS} \to {\infty}\end{subarray}} \left\vert \tilde{E}_\ell^{\rm i}({\bf r}) \right\vert &= 0,
		\label{eq:intfvanish}\\
		\lim_{\begin{subarray}{c} \\ N_{\rm BS} \to {\infty}\end{subarray}} \left\vert \tilde{E}^{\rm n}({\bf r}) \right\vert &= 0,
	\end{align}
	when the reference phase shift $\xi_n$ follows arbitrary distributions satisfying $\mathbb{E}[\cos(2\xi_n)] = 0$ and $\mathbb{E}[\sin(2\xi_n)] = 0$. 
\end{prop}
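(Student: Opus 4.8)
The plan is to interpret both $\tilde{E}_\ell^{\rm d}(\mathbf{r})$ and $\tilde{E}_\ell^{\rm i}(\mathbf{r})$ as oscillatory sums and to pass to the continuum limit $N_{\rm BS}\to\infty$, in which the uniformly spaced antenna abscissae $\mathbf{r}_{\rm BS}^{(n)}=(x_n,0)$ turn each sum into a Riemann integral over the fixed array aperture $\mathcal{A}$. The central tool is then the Riemann--Lebesgue lemma: an integral of a fixed integrable amplitude against $e^{-\jmath\kappa\phi}$ vanishes as $\kappa\to\infty$ whenever the phase $\phi$ is non-constant. Concretely, I would write $\tilde{E}_\ell^{\rm d}(\mathbf{r})$, up to the Riemann-sum normalization, as $\int_{\mathcal{A}} a_\ell(x)\,e^{-\jmath\kappa\phi_\ell(x;\mathbf{r})}\,dx$ with amplitude $a_\ell(x)=1/\|\mathbf{r}_{\rm S}^{(\ell)}-(x,0)\|$ and the \emph{range-difference} phase $\phi_\ell(x;\mathbf{r})=\|\mathbf{r}_{\rm S}^{(\ell)}-(x,0)\|-\|\mathbf{r}-(x,0)\|$, and $\tilde{E}_\ell^{\rm i}(\mathbf{r})$ analogously with the \emph{range-sum} phase $\psi_\ell(x;\mathbf{r})=\|\mathbf{r}_{\rm S}^{(\ell)}-(x,0)\|+\|\mathbf{r}-(x,0)\|$ carrying the extra factor $e^{-2\jmath\xi_n}$.

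For the desired term I would split into two cases. At $\mathbf{r}=\mathbf{r}_{\rm S}^{(\ell)}$ the phase $\phi_\ell\equiv 0$, so every summand is real and positive and the terms add coherently, yielding the peak recorded in \eqref{eq:desvanish}; this is exactly the equality case already isolated in Proposition~\ref{prop:one_max}. For $\mathbf{r}\neq\mathbf{r}_{\rm S}^{(\ell)}$ the key is to show that $\phi_\ell(\cdot;\mathbf{r})$ is not constant on $\mathcal{A}$: its derivative is the difference of the direction cosines from $(x,0)$ toward $\mathbf{r}_{\rm S}^{(\ell)}$ and toward $\mathbf{r}$, which can vanish identically only if the two points coincide (both lying on the same side of the array). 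Hence $\phi_\ell$ is real-analytic and non-constant, so $\phi_\ell'$ has only isolated zeros; I would partition $\mathcal{A}$ into finitely many intervals of monotonicity, substitute $u=\phi_\ell(x)$ on each to obtain a genuine Fourier integral, and apply Riemann--Lebesgue to kill the non-stationary part, while each isolated stationary point contributes only a term decaying like $\kappa^{-1/2}$ by stationary phase. This gives the ``otherwise'' branch of \eqref{eq:desvanish}.

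For the interference term I would combine two complementary mechanisms. First, $\psi_\ell$ is a \emph{sum} of two ranges, hence strictly convex with a single interior minimizer and therefore non-constant for \emph{every} $\mathbf{r}$ --- including $\mathbf{r}=\mathbf{r}_{\rm S}^{(\ell)}$; the same monotonic-substitution-plus-Riemann--Lebesgue argument then sends $\tilde{E}_\ell^{\rm i}(\mathbf{r})\to 0$, which is precisely why the conjugate image term created by the power measurement cannot raise a spurious peak. Second, and this is where the hypothesis on $\xi_n$ enters, I would average over the independent reference phases: since $\mathbb{E}[e^{-2\jmath\xi_n}]=\mathbb{E}[\cos 2\xi_n]-\jmath\,\mathbb{E}[\sin 2\xi_n]=0$, the mean of $\tilde{E}_\ell^{\rm i}(\mathbf{r})$ is zero for every $\kappa$, and a variance estimate under independence together with a concentration/law-of-large-numbers argument shows the realized interference is vanishing relative to the coherent peak of \eqref{eq:desvanish} as $N_{\rm BS}\to\infty$, establishing \eqref{eq:intfvanish}.

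The hardest part will be the geometric core of the desired-term argument --- proving that the range-difference phase $\phi_\ell$ is genuinely non-constant, and that its stationary points are isolated and contribute negligibly --- since this is what underlies the claimed super-resolution of the reconstruction. A second, more technical subtlety is the order of the two limits: for fixed finite $N_{\rm BS}$ the sum is almost periodic in $\kappa$ and does not converge on its own, so the continuum passage $N_{\rm BS}\to\infty$ (with the aperture held fixed) must be taken first, or jointly, before Riemann--Lebesgue can be invoked. I would therefore state and use the result as a sequential (or suitably coupled joint) limit and make the Riemann-sum-to-integral step explicit.
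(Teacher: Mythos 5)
Your proposal follows essentially the same route as the paper for the desired term: pass from the antenna sum to a Riemann integral over the aperture and kill the non-coherent case by oscillation. The paper implements the Riemann--Lebesgue step by hand --- it pushes the uniform measure on the aperture forward through $f(t)=r_1(t)-r_2(t)$ to a density $p_x$, then averages the integral with its copy shifted by $\pi/\kappa$ to get the bound $\frac{1}{2}\int_{\mathbb{R}}\vert p_x(x)-p_x(x+\pi/\kappa)\vert\,\mathrm{d}x\to 0$; your monotone substitution $u=\phi_\ell(x)$ followed by a direct invocation of Riemann--Lebesgue is the same idea in different clothing. Where you genuinely go beyond the paper is in the details it leaves implicit: you explicitly argue that the range-difference phase is non-constant for $\mathbf{r}\neq\mathbf{r}_{\rm S}^{(\ell)}$ (via the direction cosines) and account for isolated stationary points with an $O(\kappa^{-1/2})$ estimate, and you correctly flag that the sequential order of the limits ($N_{\rm BS}\to\infty$ before $\kappa\to\infty$) matters --- neither point is addressed in the paper. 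For the interference term the two arguments diverge more substantially: the paper only computes $\mathbb{E}[\tilde{E}_\ell^{\rm i}(\mathbf{r})]=0$ from $\mathbb{E}[\cos 2\xi_n]=\mathbb{E}[\sin 2\xi_n]=0$ and asserts the conclusion, which by itself does not control the realized value (the fluctuation of a sum of $N_{\rm BS}$ independent zero-mean terms grows like $\sqrt{N_{\rm BS}}$, so the claim really holds only relative to the $O(N_{\rm BS})$ coherent peak, as you note); your two-pronged argument --- deterministic Riemann--Lebesgue on the strictly convex range-sum phase plus a variance/concentration bound --- is the more complete treatment and buys a statement that does not silently depend on normalization. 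Both routes are sound; yours is the more rigorous version of the paper's sketch.
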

\begin{proof}
	See proof in Appendix~\ref{sec:appendix2}.
\end{proof}
Note that the distribution constraints in Proposition~\ref{prop:RiemannLebesgue} can be readily satisfied by choosing the distribution of $\xi_n$ as, e.g., a uniform distribution over $[0,\pi]$. Propositions~\ref{prop:one_max} and~\ref{prop:RiemannLebesgue} establish a fundamental understanding of the dynamics in time inversion near-field reconstruction. Specifically, we demonstrate that the desired term will reach its maximum at ${\bf r} = {\bf r}_{\rm S}^{(\ell)}$ for the $\ell$-th path, which is crucial for accurate localization in the near-field region. Besides, both the desired and the interference terms will asymptotically diminish to $0$ everywhere except the points specified in Proposition~\ref{prop:one_max} for the desired part. A demonstration of near-field localization by the proposed time inversion algorithm is shown in Fig.~\ref{fig:rec_example}, where the location coordinates of the UE and $L=3$ scatterers are accurately estimated. Next, we will leverage these properties to develop a low-complexity time inversion algorithm for localization in near-field XL-MIMO.

{
\begin{remark}
	In XL-MIMO systems, spatial non-stationarity can occur, meaning that some multipath components are detected only by certain antennas, but not available for other antennas. 
	Our proposed method is also applicable to these potential spatial non-stationary scenario, although the reduced effective aperture of the visible array may lead to higher localization errors. With the widely adopted orthogonal frequency division multiplexing (OFDM) waveform, we can aggregate the reconstruction results over multiple subcarriers, which effectively improves the distance resolution and suppresses the interference to relieve the performance drain caused by spatial non-stationarity~\cite{GLOBECOM}.
\end{remark}}
\subsection{Low-Complexity Time Inversion Algorithm}
The locations of the UE and scatterers can be reconstructed exploiting the proposed time inversion method by enumerating possible target locations ${\bf r}$ in the near-field grid. Specifically, during the sensing stage, the BS array generates a uniformly distributed vector $\boldsymbol{\xi}=[\xi_1,\cdots,\xi_{N_{\rm BS}}]$ and performs phase shift $e^{\jmath\xi_n}$ at the $n$-th antenna element. The near-field region is split into $G_x\times G_y$ grids, where $G_x$ ($G_x>N_{\rm BS}$) and $G_y$ denote the numbers of grids on the $x$-axis and the $y$-axis, respectively. The compensated wave at the $i$-th grid on $x$-axis and the $j$-th grid on $y$-axis, i.e., ${\bf r}^{(i,j)} = (x_i,y_j)$, is finally manipulated by the time inversion method using~\eqref{eq:timeinv_compensate}, and is then stored into the $i$-th row and the $j$-th column of the amplitude matrix $\boldsymbol{\Theta}\in\mathbb{C}^{G_x\times G_y}$, where
\begin{equation}
	\boldsymbol{\Theta}[i,j] = \tilde{E}\left({\bf r}^{(i,j)}\right).
	\label{eq:theta}
\end{equation}
The coordinates of the UE and scatterers can then be estimated as
\begin{equation}
	\left(\hat{x}^{(\ell)},\hat{y}^{(\ell)}\right)\!=\!\left( x_L+{i}^*\frac{x_H-x_L}{G_x}, y_L+j^*\frac{y_H-y_L}{G_y} \right),
	\label{eq:location2}
\end{equation}
where
\begin{equation}
	\begin{aligned}
		({i^*},{j^*}) = {\rm arg}\underset{i,j}{\rm max}\ \left\vert \boldsymbol{\Theta}[i,j] \right\vert^2.
	\end{aligned}
	\label{eq:location}
\end{equation}
However, constructing $\boldsymbol{ \Theta }$ in~\eqref{eq:theta} by enumerating all possible near-field coordinates ${\bf r}^{(i,j)} = (x_{i},y_{j})^T$ on the $G_x\times G_y$ grids can be time-consuming, especially when the number of grids and $N_{\rm BS}$ are extremely large. To effectively reduce the number of computations in the proposed time inversion algorithm, we expand the procedure~\eqref{eq:timeinv_compensate} via a grid-search manner as
\begin{equation}
	\begin{aligned}
		\tilde{E}\left({\bf r}^{(i,j)}\right) = \sum_{n=1}^{N_{\rm BS}} {{\bm \hbar}}[n] {\bf e}_{\rm ref}[n] e^{\jmath\kappa \sqrt{\left( x_{\rm BS}^{(n)} - x_i \right)^2+y_j^2}},
	\end{aligned}
\end{equation}
which can be reformulated as the linear convolution of vector $\overline{{\bm \hbar}} = {{\bm \hbar}}\odot {\bf e}_{\rm ref} \in\mathbb{C}^{N_{\rm BS}}$ defined in~\eqref{eq:holoref} and $\overline{\bf g}_{j}= [e^{\jmath\kappa \sqrt{x_1^2+y_j^2}} , \cdots , e^{\jmath\kappa \sqrt{x_{G_x}^2+y_j^2}}]^T\in\mathbb{C}^{G_x\times 1}$. Hence, for a fixed $j$-th grid on $y$-axis, the time inversion reconstructed result\footnote{Convolution extends the dimension to $N_{\rm BS}+G_x-1$, which leads to incomplete reconstruction at the edges, and the valid results are obtained only for indices from $N_{\rm BS}$ to $G_x$. To satisfy the need for $G_x$ valid points, the $x$-axis grid will be extended to $G_x^\prime = G_x + 2N_{\rm BS}$ to compensate the associated edge effects.} can be recast as
\begin{equation}
	\boldsymbol{\Theta}[:,j] = \left\vert \left( \overline{{\bm \hbar}}  \circledast \overline{\bf g}_j \right)_{N_{\rm BS}:G_x^\prime-N_{\rm BS}-1} \right\vert,
	\label{eq:convrec}
\end{equation}
where $\overline{\bf g}_{j}= [e^{\jmath\kappa \sqrt{x_1^2+y_j^2}},\cdots,e^{\jmath\kappa \sqrt{x_{G_x^\prime}^2+y_j^2}}]^T\in\mathbb{C}^{G_x^\prime\times 1}$ with $G_x^\prime = G_x+2N_{\rm BS}$. Notably,~\eqref{eq:convrec} can be further implemented efficiently using FFT as
\begin{equation}
	\boldsymbol{\Theta}[:,j] = \left\vert \mathcal{F}^{-1}_{G_x^\prime} \left[\mathcal{F}_{G_x^\prime} \left[ \overline{{\bm \hbar}}  \right]  \odot \mathcal{F}_{G_x^\prime} \left[ \overline{\bf g}_j  \right]  \right]_{N_{\rm BS}:G_x^\prime-N_{\rm BS}-1} \right\vert,
	\label{eq:fftrec}
\end{equation}
where $\mathcal{F}_{a}[\cdot]$ denotes an FFT operation with padding length $2^{\lceil \log_2 a\rceil}$, and $\mathcal{F}^{-1}_{a}[\cdot]$ denotes an inverse FFT (IFFT) operation with trimming length $a$. By resorting to FFT, the computational complexity can be greatly reduced from $\mathcal{O}(N_{\rm BS} G_x)$ to $\mathcal{O}(P\log P)$, where $P = 2^{\lceil \log_2 G_x^\prime \rceil}$ is the nearest power-$2$ number larger than $G_x^\prime$.
\begin{algorithm}[t]
	\caption{Time Inversion Multipath Localization}\label{alg:TIrec}
	\begin{algorithmic}[1]
		\REQUIRE The range of near-field region $x_L$, $x_H$, $y_L$, and $y_H$, the numbers of grids $G_x$ and $G_y$, multipaths $L$,\footnotemark~and antennas $N_{\rm BS}$. Measured power pattern $\bf p$ and reference wave ${\bf e}_{\rm ref}$.
		\ENSURE The estimated coordinates $\{\hat{\bf r}_{\rm S}^{(\ell)} = (\hat{x}^{(\ell)},\hat{y}^{(\ell)}) \}_{\ell = 0}^{L}$ of the UE and scatterers.
		\STATE Draw uniformly distributed random phase ${\boldsymbol{\xi}}\sim \mathcal{U}[0,2\pi]$.
		\STATE Initialize the hologram ${{\bm \hbar}}_0[n] = {\bf p}[n] - \left\vert  {\bf e}_{\rm ref}[n] \right\vert^2$.
		\FOR{$\ell = 0,\cdots,L$}
		\STATE Initialize the empty amplitude matrix ${\boldsymbol{\Theta}}\in\mathbb{R}^{G_x\times G_y}$.
		\FOR{$j = 1,\cdots,G_y$}
		\STATE Fill the $j$-th column according to~\eqref{eq:fftrec}.
		\ENDFOR
		\STATE Find the maximal grid as
		\vspace{-2mm}
		\begin{equation}
			({i^*},{j^*}) = {\rm arg}\underset{i,j}{\rm max}\ \left\vert \boldsymbol{\Theta}[i,j] \right\vert^2.\notag
		\end{equation}
		\vspace{-5mm}
		\STATE Estimate the $\ell$-th coordinate as
		\begin{equation}
			\left(\hat{x}^{(\ell)},\hat{y}^{(\ell)}\right)\!=\!\left( x_L\!+\!{i}^*\frac{x_H-x_L}{G_x}, y_L\!+\!j^*\frac{y_H-y_L}{G_y} \right).\notag
		\end{equation}
		\vspace{-3mm}
		\STATE Update the hologram according to~\eqref{eq:updateholo}.
		\ENDFOR
		\STATE Reture the coordinates of UE and scatterers $\{\hat{\bf r}_{\rm S}^{(\ell)} =  (\hat{x}^{(\ell)},\hat{y}^{(\ell)}) \}_{\ell = 0}^{L}$.
	\end{algorithmic}
\end{algorithm}
\footnotetext{
	To obtain the value of $L$, an empirical threshold can be adopted to identify the peaks in $\boldsymbol{\Theta}$ that are significantly higher than other ripples. 
}

\subsection{Proposed Multipath Localization}

According to~\eqref{eq:compensated_rec} and~\eqref{eq:desvanish}, the reconstructed channel attenuation at the coordinate of the $\ell$-th scatterer is given by
\begin{equation}
	\hat{\alpha}_{\ell}  = \overline{\alpha}_\ell  \sum_{n=1}^{N_{\rm BS}} \frac{1}{d^{(\ell,n)} }.
\end{equation}
By defining ${{\bm \hbar}}_0[n] = {\bf e}_{\rm ref}^{*}[n]{\bf e}_{\rm obj}[n] + {\bf e}_{\rm ref}[n]{\bf e}^{*}_{\rm obj}[n]$ and assuming that the power of multipath components $\hat{\alpha}_{\ell} $ decreases with $\ell$ as $\hat{\alpha}_{0} \geq \hat{\alpha}_{1} \geq\cdots\geq \hat{\alpha}_{L} $, we can detect the location coordinates from the $\ell$-th path by iteratively deleting the multipath components as
\begin{equation}
	\begin{aligned}
		{{\bm \hbar}}_{\ell+1}[n] ={}& {{\bm \hbar}}_{\ell}[n] - \left( {\bf e}_{\rm ref}^{*}[n]\hat{\alpha}_{\ell}  g\left( {\bf r}^{(\ell)}_{\rm S},{\bf r}_{\rm BS}^{(n)} \right) \right.\\&+{\bf e}_{\rm ref}[n]\left.\hat{\alpha}_{\ell}^* g^*\left( {\bf r}^{(\ell)}_{\rm S},{\bf r}_{\rm BS}^{(n)} \right) \right)\\
		={}&{\bf e}_{\rm ref}^*[n] \sum_{\ell^\prime=\ell+1}^{L} \hat{\alpha}_{\ell^\prime}  g\left( {\bf r}^{(\ell^\prime)}_{\rm S},{\bf r}_{\rm BS}^{(n)} \right) \\&+{\bf e}_{\rm ref}[n] \sum_{\ell^\prime=\ell+1}^{L} \hat{\alpha}_{\ell^\prime}^* g^*\left( {\bf r}^{(\ell^\prime)}_{\rm S},{\bf r}_{\rm BS}^{(n)} \right).
		\label{eq:updateholo}
	\end{aligned}
\end{equation}
With the $\ell$-th hologram pattern ${{\bm \hbar}}_\ell$ containing $L-\ell +1$ multipath components, we iteratively apply the proposed time inversion algorithm to detect the location coordinates of the scatterers in a power decreasing manner. The overall localization procedure is summarized in Algorithm~\ref{alg:TIrec}.

\section{Channel Estimation with Locations}
\label{sec:ce}
In this section, we propose a novel dictionary design algorithm for CE that fully utilizes the estimated locations of the UE and scatterers in Section~\ref{sec:coord_est}. 
Besides, the detected location coordinates of the scatterers are also exploited in estimating the channel attenuations of NLoS paths.

\subsection{Eigen-Dictionary Design for Sparse Representation}
For near-field wireless communications, it is highly likely that the LoS component dominates the channel gain~\cite{9139337}. Note that the UE location detected by the time inversion algorithm only provides a coarse estimation, i.e., $\hat{\bf r}_{\rm S}^{(0)}$, of the coordinates ${\bf r}_{\rm UE}^{(m)}$ in the LoS component of the near-field channel model~\eqref{eq:LoS}. In this subsection, we develop a fine-grained CE for the LoS path by proposing a novel dictionary design to combat the model mismatch between far-field and near-field channel models. Specifically, we perform EVD on the auto-correlation matrix of the near-field channel and derive the general form of the eigenvectors. The proposed eigen-dictionary is therefore constructed based on the estimated location of the UE and eigenvectors.

Recall that problem $\rm (P1)$ requires the identification of a dictionary $\boldsymbol{\Psi}$ that efficiently sparsifies the near-field channel matrix. In this regard, the singular value decomposition (SVD) decomposes the channel matrix in the form of ${\bf H} = {\bf U}{\boldsymbol{\Sigma}}{\bf V}^{H}$, where ${\bf H}$ can be properly sparsified to a diagonal singular value matrix ${\boldsymbol{\Sigma}}$ by unitary matrices ${\bf U}$ and ${\bf V}$. More importantly, the resulting dictionary $\boldsymbol{\Psi} = {\bf V}^*\otimes {\bf U}$ also shows mutual orthogonality between codewords. Inspired by the SVD, we consider designing the dictionary matrix exploiting the singular vectors. Since the channel matrix is not a square matrix when $N_{\rm UE}\neq N_{\rm BS}$, singular vectors can be obtained separately from the corresponding EVD of the auto-correlation matrices. According to~\eqref{eq:channelmodel}, the auto-correlation matrix at the BS side is given by
\begin{equation}
	\begin{aligned}
		{\bf R}_{\rm BS} & \overset{~~~}{=} \mathbb{E}\left[{\bf H} {\bf H}^{H}\right]\\
		& \overset{~~~}{=} {\bf H}_{\rm LoS}{\bf H}_{\rm LoS}^{H}+\mathbb{E}\left[ {\bf H}_{\rm NLoS}{\bf H}_{\rm NLoS}^{H} \right]\\
		&\overset{~~~}{=} {\bf H}_{\rm LoS}{\bf H}_{\rm LoS}^{H}+
		\boldsymbol{ \Gamma },
	\end{aligned}
	\label{eq:autocorr}
\end{equation}
where $\boldsymbol{ \Gamma } = {\rm diag}(\gamma_1,\cdots,\gamma_{N_{\rm BS}})$ is a real-valued diagonal matrix with 
\begin{equation}
	\gamma_n = \frac{1}{L} \sum_{\ell=1}^{L}\sum_{m}^{N_{\rm UE}}\frac{1}{\Vert {\bf r}_{\rm UE}^{(m)}-{\bf r}_{\rm S}^{(\ell)} \Vert^2 \Vert {\bf r}_{\rm BS}^{(n)}-{\bf r}_{\rm S}^{(\ell)} \Vert^2}.
\end{equation}
Note that $\boldsymbol{ \Gamma }$ in~\eqref{eq:autocorr} is a diagonal matrix and has no impact on calculating eigenvectors since it only adds $\gamma_n$ to the $n$-th eigenvalue. Therefore, the element located at the $n^\prime$-th row and $n$-th column of ${\bf R}_{\rm BS}$ can be expressed by
\begin{equation}
	\begin{aligned}
		&{\bf R}_{\rm BS}\left[n^\prime,n\right] \\
		={}&  {\mathbf g}_{\rm UE}^H\left({\bf r}_{\rm BS}^{(n)}\right){\mathbf g}_{\rm UE}\left({\bf r}_{\rm BS}^{(n)}\right)+\gamma_n\mathds{1}\{n=n^\prime\}\\={}&\gamma_n\mathds{1}\{n=n^\prime\}\!+\! \sum_{m=1}^{N_{\rm UE}}\frac{e^{-\jmath\kappa \Vert{\bf r}_{\rm BS}^{(n)}-{\bf r}_{\rm UE}^{(m)}\Vert }}{\Vert{\bf r}_{\rm BS}^{(n)}-{\bf r}_{\rm UE}^{(m)}\Vert}  \frac{e^{\jmath\kappa \Vert{\bf r}_{\rm BS}^{(n^\prime)}-{\bf r}_{\rm UE}^{(m)}\Vert}}{\Vert{\bf r}_{\rm BS}^{(n^\prime)}-{\bf r}_{\rm UE}^{(m)}\Vert},
	\end{aligned}
	\label{eq:autocorr2}
\end{equation}
where $\mathds{1}\{\cdot\}$ denotes the indicator function. Introducing the near-field paraxial approximation\cite{Miller}, we have
\begingroup
\allowdisplaybreaks
\begin{align}
	&~{\bf R}_{\rm BS}[n^\prime,n]\notag\\ \approx 
	&~\gamma_n\mathds{1}\{n=n^\prime\} +\frac{1 }{r_0^2}\sum_{m=1}^{N_{\rm UE}}e^{-\jmath\kappa\frac{\left(x_{\rm BS}^{(n)}-x_{\rm UE}^{(m)}\right)^2-\left(x_{\rm BS}^{(n^\prime)}-x_{\rm UE}^{(m)}\right)^2}{2 \hat{y}^{(0)} }}\notag\\
	=&~\gamma_n\mathds{1}\{n=n^\prime\}+\frac{  e^{\jmath\kappa \frac{\left(x_{\rm BS}^{(n^\prime)}\right)^2-\left(x_{\rm BS}^{(n)}\right)^2}{2 \hat{y}^{(0)} }}}{r_0^2 } \sum_{m=1}^{N_{\rm UE}} e^{\jmath\kappa\frac{x_{\rm UE}^{(m)}\left(x_{\rm BS}^{(n)}-x_{\rm BS}^{(n^\prime)}\right)}{\hat{y}^{(0)} } }\notag\\
	\triangleq&~\gamma_n\mathds{1}\{n=n^\prime\}+  e^{\jmath\kappa \frac{\left(x_{\rm BS}^{(n^\prime)}\right)^2-\left(x_{\rm BS}^{(n)}\right)^2}{2 \hat{y}^{(0)} }} \overline{\bf R}_{\rm BS}\left[n^\prime,n\right],\label{eq:parax}
\end{align}
\endgroup
where $r_0=\| \hat{\bf r}_{\rm S}^{(0)} \|$ is an approximation of the distance term in the denominator of the second term in \eqref{eq:autocorr2}. Furthermore, $\hat{y}^{(0)}$ denotes the estimated $y$-coordinate of the center of the UE array, which is obtained from Algorithm~\ref{alg:TIrec}. 
Hence, we can rewrite the EVD procedure of ${\bf R}_{\rm BS}$ as 
\begin{equation}
	{\bf R}_{\rm BS}{\bf u}_n = {\bf D}_{\rm BS}^{-1} \left(  \overline{\mathbf R}_{\rm BS}+{\boldsymbol{\Gamma}} \right){\bf D}_{\rm BS}{\bf u}_n = \lambda_n{\bf u}_n,
	\label{eq:evd}
\end{equation}
where ${\bf u}_n$ is the $n$-th eigenvector of ${\bf R}_{\rm BS}$ and $\lambda_n\geq 0$ is the corresponding eigenvalue. According to~\eqref{eq:parax}, we define a compensation matrix ${\bf D}_{\rm BS}$ given by
\begin{equation}
	{\bf D}_{\rm BS} = {\rm diag}\left(e^{\jmath\kappa \frac{\left(x_{\rm BS}^{(1)}\right)^2}{2 \hat{y}^{(0)} }},\cdots,e^{\jmath\kappa \frac{\left(x_{\rm BS}^{(N_{\rm BS})}\right)^2}{2 \hat{y}^{(0)} }}\right),
	\label{eq:compensation}
\end{equation}
which can be well determined according to the BS antenna array configuration.
\begin{algorithm}[t]
	\caption{Proposed Dictionary Design Algorithm}\label{alg:a1}
	\begin{algorithmic}[1]
		\REQUIRE Estimated coordinate $(\hat{x}^{(0)},\hat{y}^{(0)})$ and the numbers of antennas $N_{\rm BS}$ and $N_{\rm UE}$.
		\ENSURE The DPSS-based eigen-dictionary $\boldsymbol{\Psi}_E$.
		\STATE Estimate the compensation matrix ${\bf D}_{\rm BS}$ and ${\bf D}_{\rm UE}$ according to~\eqref{eq:compensation}.
		\STATE Calculate the frequency ${\hat W}=\kappa L_{\rm R}/(4\pi \hat{y}^{(0)})$.
		\STATE Generate $\overline{\bf R}_{\rm BS}$ and $\overline{\bf R}_{\rm UE}$ with DPSS according to \eqref{eq:toeplitzmat}.
		\STATE Perform EVD on $\overline{\bf R}_{\rm BS}=\mathbf{U}\boldsymbol{\Lambda}\mathbf{U}^{-1}$ and $\overline{\bf R}_{\rm UE}=\mathbf{V}{\boldsymbol{\Lambda}^\prime}\mathbf{V}^{-1}$.
		\STATE Compensate the phase shift ${\bf U}^c={\bf D}_{\rm BS}^{-1} {\bf U}$, ${\bf V}^c\!=\!{\bf D}_{\rm UE}^{-1} {\bf V}$.
		\STATE Return eigen-dictionary ${\boldsymbol{\Psi}}_E = ({\bf V}^c)^* \otimes {\bf U}^c$.
	\end{algorithmic}
\end{algorithm}
Note that extracting ${\bf D}_{\rm BS}$ from ${\bf R}_{\rm BS}$ only alters the phase of each eigenvector since $(\overline{\mathbf R}_{\rm BS}+ {\boldsymbol{\Gamma}} ){\bf D}_{\rm BS}{\bf u}_n = \lambda_n {\bf D}_{\rm BS} {\bf u}_n$. Furthermore, \eqref{eq:parax} yields
\begin{equation}
	\begin{aligned}
		\overline{\bf R}_{\rm BS}[n^\prime,n] & \overset{~~~}{=} \frac{1}{r_0^2} \sum_{m=1}^{N_{\rm UE}} e^{\jmath\kappa\frac{x_{\rm UE}^{(m)}\left(x_{\rm BS}^{(n)}-x_{\rm BS}^{(n^\prime)}\right)}{\hat{y}^{(0)} } }\\
		&\overset{(b)}{\simeq} \frac{1}{r_0^2} \int_{-\frac{L_{\rm UE}}{2}}^{\frac{L_{\rm UE}}{2}}e^{\frac{\jmath\kappa}{\hat{y}^{(0)} }x (x_{\rm BS}^{(n)}-x_{\rm BS}^{(n^\prime)})} {\rm d}x \\
		&\overset{~~~}{=}\frac{ 2\hat{y}^{(0)} \sin\left[ \frac{\kappa L_{\rm UE} (x_{\rm BS}^{(n)}-x_{\rm BS}^{(n^\prime)} )}{2 \hat{y}^{(0)}  } \right]  }{ r_0^2 \kappa \left(x_{\rm BS}^{(n)}-x_{\rm BS}^{(n^\prime)} \right)}\\ &\overset{~~~}{\propto}\frac{\sin\left[2\pi W (x_{\rm BS}^{(n)}-x_{\rm BS}^{(n^\prime)} )\right]  }{ \left(x_{\rm BS}^{(n)}-x_{\rm BS}^{(n^\prime)} \right)},
	\end{aligned}
	\label{eq:toeplitzmat}
\end{equation}
where $L_{\rm UE}=(N_{\rm UE}-1)\lambda/2$ denotes the aperture of the UE array with half-wavelength antenna spacing. Note that $(b)$ holds asymptotically when $N_{\rm UE}$ is sufficiently large. Also, $\overline{\bf R}_{\rm BS}$ is a Toeplitz matrix with each column composed of a \textit{shifted sinc function}, and the $n$-th eigenvector ${\bf D}_{\rm BS} {\bf u}_n$ of this matrix is called the $(n-1)$-th order \emph{discrete prolate spheroidal sequence} within frequency $W=\kappa L_{\rm UE}/(4\pi \hat{y}^{(0)})$~\cite{6373401}. 
\begin{remark}
	Recall that a satisfactory dictionary can be constructed by eigenvectors of the auto-correlation matrix ${\bf R}_{\rm BS}$. However, numerically estimating the auto-correlation matrix typically requires a large number of samples. Fortunately, with the analytical result in~\eqref{eq:toeplitzmat}, the auto-correlation matrix can be well-determined directly by a series of sinc functions given the frequency $W$, from which we can readily generate the dictionary by applying an efficient EVD operation. 
\end{remark}
Similarly, we can calculate the eigenvectors $\{ {\bf u}_m \}_{m=1}^{N_{\rm UE}}$ of ${\bf R}_{\rm UE} = \mathbb{E}[{\bf H}^{H}{\bf H}]\approx\mathbf{V}\boldsymbol{\Lambda}^\prime\mathbf{V}^{H}\in\mathbb{C}^{N_{\rm UE}\times N_{\rm UE}}$ and finally form the eigen-dictionary matrix as ${\boldsymbol{\Psi}}_E = (\mathbf{D_{\rm UE}V}^*) \otimes \mathbf{D}_{\rm BS}^{-1}{\bf U} \in\mathbb{C}^{N_{\rm UE}N_{\rm BS}\times N_{\rm UE}N_{\rm BS}}$ for problem $\rm (P1)$. The overall dictionary design procedure is summarized in Algorithm~\ref{alg:a1}. By resorting to the EVD tailored for the near-field channel matrix, we can effectively eliminate the mismatch issue associated with the DFT dictionary. Moreover, the proposed DPSS-based eigen-dictionary naturally holds orthogonality among columns, since both $\bf V$ and $\bf U$ are unitary matrices. This is one of the key advantages compared to the spherical dictionary, which shall be validated via simulation in the next section. 

\subsection{Proposed Multipath Near-Field Channel Estimation}
\label{sec:proposed}
When calculating the auto-correlation matrix in~\eqref{eq:autocorr}, the impacts caused by NLoS paths are averaged out, which indicates that the eigen-dictionary $\boldsymbol{ \Psi}_E$ is only applicable for the LoS path. Therefore, in this subsection we propose an efficient CE scheme, where the NLoS paths are also taken into consideration.

The NLoS channel model in~\eqref{eq:NLoS} can be reformulated in a compact matrix form as
\begin{equation}
	\begin{aligned}
		{\bf H}_{\rm NLoS} = {\bf G}_{\rm UE}\overline{\boldsymbol{\Lambda}}{\bf G}_{\rm BS}^{T},
		\label{eq:NLoS2}
	\end{aligned}
\end{equation}
where ${\bf G}_{\rm UE} = [{\bf g}_{\rm UE}({\bf r}_{\rm S}^{(1)}),\cdots,{\bf g}_{\rm UE}({\bf r}_{\rm S}^{(L)})]\in\mathbb{C}^{N_{\rm UE}\times L}$ and ${\bf G}_{\rm BS} = [{\bf g}_{\rm BS}({\bf r}_{\rm S}^{(1)}),\cdots,{\bf g}_{\rm BS}({\bf r}_{\rm S}^{(L)})]\in\mathbb{C}^{N_{\rm BS}\times L}$ denote the near-field channel steering matrices at the UE side and BS side, respectively, and $\overline{\boldsymbol{\Lambda}} = {\rm diag}(\boldsymbol{\alpha})= {\rm diag}(\alpha_1,\cdots,\alpha_L)$ is the channel attenuation matrix. Vectorizing the NLoS channel matrix in~\eqref{eq:NLoS2} yields
\begin{equation}
	\begin{aligned}
		{\rm vec}\left({\bf H}_{\rm NLoS}\right) = \left( {\bf G}_{\rm BS}\otimes {\bf G}_{\rm UE} \right) {\rm vec}(\overline{\boldsymbol{\Lambda}})={\boldsymbol{\Psi}}_G \boldsymbol{\alpha},
		\label{eq:NLoS3}
	\end{aligned}
\end{equation}
where the overall steering matrix 
\begin{equation}
	\begin{aligned}
		{\boldsymbol{ \Psi }}_G=&\left[
			{\bf g}_{\rm BS}\left({\bf r}_{\rm S}^{(1)}\right)\otimes {\bf g}_{\rm UE}\left({\bf r}_{\rm S}^{(1)}\right) , \cdots ,\right.\\
			&~\left.{\bf g}_{\rm BS}\left({\bf r}_{\rm S}^{(L)}\right) \otimes {\bf g}_{\rm UE}\left({\bf r}_{\rm S}^{(L)}\right)
		\right]\in\mathbb{C}^{N_{\rm BS}N_{\rm UE}\times L}
	\end{aligned}
\end{equation}
denotes the remaining columns of ${\bf G}_{\rm BS}\otimes {\bf G}_{\rm UE}$ under the column selection of ${\rm vec}(\overline{\boldsymbol{\Lambda}})$, and can be readily obtained by the estimated coordinates $\{\hat{\bf r}_{\rm S}^{(\ell)}\}_{\ell=1}^L$.

Substituting~\eqref{eq:channelmodel} and~\eqref{eq:NLoS3} into~\eqref{eq:linearproblem}, the measurement model can then be reformulated as
\begin{equation}
	\begin{aligned}
		{\bf y} &= \boldsymbol{\Phi}\left(  \boldsymbol{\Psi}_G \boldsymbol{\alpha}  +  \boldsymbol{\Psi}_E  {\rm vec}(\overline{\boldsymbol{\Sigma}})  \right)+\tilde{\bf n}\\
		&=\boldsymbol{\Phi}\left( \boldsymbol{\Psi}_{G,E} \left[\boldsymbol{\alpha}^T,  {\rm vec}\left(\overline{\boldsymbol{\Sigma}}\right)^T\right]^T  \right)+\tilde{\bf n},
	\end{aligned}
\end{equation}
where $\overline{\boldsymbol{\Sigma}} = {\bf D}_{\rm UE}\boldsymbol{\Sigma}{\bf D}_{\rm BS}^T$, and ${\boldsymbol{\Psi}}_{G,E} = [\boldsymbol{\Psi}_{G},\boldsymbol{\Psi}_{E}]\in\mathbb{C}^{N_{\rm BS}N_{\rm UE}\times (N_{\rm BS}N_{\rm UE}+L)}$. In other words, according to \eqref{eq:linearproblem}, the matrix ${\boldsymbol{\Psi}}_{G,E}$ can effectively sparsify the near-field channel and therefore can serve as the overall dictionary for CE.

\section{Simulation Results}
In this section, we evaluate the localization and CE performance based on the proposed methods via numerical simulations. The localization performance is evaluated by root mean square error (RMSE) as ${\rm RMSE} = \sqrt{\mathbb{E}[ \Vert \hat{\bf r}-{\bf r} \Vert^2 ] }$, 
where $\hat{\bf r}$ is an estimation of ${\bf r}$. The CE performance is evaluated by normalized mean square error (NMSE) as ${\rm NMSE}( \hat{\bf H} ,{\bf H} ) = \mathbb{E}[{\Vert \hat{\bf H} -{\bf H} \Vert_F^2}/{\Vert {\bf H} \Vert_F^2}]$, 
where $\Vert\cdot\Vert_F$ is the Frobenius norm, and $\hat{\bf H}$ is an estimation of ${\bf H}$.
\begin{table*}[ht]
	\centering
	\caption{Details of Different Localization Algorithms}
	\label{tab:loc}
	\begin{threeparttable}
		\begin{tabular}{c|c|c|ccc}
			\hline\hline
			\multirow{2}{*}{} & \multirow{2}{*}{Baseband Samples} & \multirow{2}{*}{Computational Complexity (Number of Multiplications)} & \multicolumn{3}{c}{Average Runtime (s)} \\ \cline{4-6} 
			&                       &                          & $\eta = 1$      & $\eta = 2$     & $\eta = 3$     \\ \hline
			MUSIC                 & $N_{\rm BS}/2$                      &  $N_{\rm BS}^3+\left(N_{\rm BS}-L+N_{\rm UE} +G_xG_y(N_{\rm BS}-L+1) \right)N_{\rm BS}^2$                        &   $0.8572$                          &  $3.5272$                            &  $8.0858$     \\ \hline
			{BF}                 &  $N_{\rm BS}/2$                &  $N_{\rm BS}G_x G_y+N_{\rm BS}L(L+1)(2L+1)/6+L^2(L+1)^2/4$    &   $0.0548$                           &  $0.1984$                            &   $0.4232$    \\ \hline
			{STT*}                 &  $N_{\rm BS}N_{\rm UE} T_s$                &  $8 T_a\left( N_{\rm BS}+N_{\rm UE} \right)+2N_{\rm BS}G_xG_y$    &   $-$                           &  $-$                            &   $-$    \\ \hline
			Proposed      & $0$           &  $G_y\lceil \log_2 G_x^\prime \rceil 2^{\lceil \log_2 G_x^\prime \rceil}$      & $\bf 0.0022$              &  $\bf 0.0042$                 &  $\bf 0.0065$     \\ \hline\hline
		\end{tabular}
		\begin{tablenotes}
			\footnotesize
			\item[*] $T_a$ denotes the number of training iterations.
		\end{tablenotes}
	\end{threeparttable}
	\vspace{-3mm}
\end{table*}
\subsection{Simulation Setup}
\label{sec:setup}
Throughout the simulation, we consider the arrays at the BS and UE are equipped with $N_{\rm BS}=256$ and $N_{\rm UE} = 4$ antennas with half-wavelength spacing\footnote{Antenna spacing no greater than half a wavelength effectively avoids $2\pi$ phase wrapping effect.}, respectively, and the carrier frequency is set as $f_c = 28\,{\rm GHz}$. The BS array is placed symmetrically on the $x$-axis and the UE is in the near-field region of the BS as shown in Fig.~\ref{fig:sysmodela}. Unless otherwise specified, we deploy a single RF chain at both the BS and UE. The distances from the UE and $L=5$ scatterers to the center of the BS array are selected uniformly from a rectangular area limited by $x_L=-5\,{\rm m}$, $x_H=5\,{\rm m}$, $y_L = 2\,{\rm m}$, and $y_H=25\,{\rm m}$ within the near-field region. {The thermal noise power is set as $\sigma_{\rm n}^2 = -101\,{\rm dBm}$.}

\subsection{Power Sensor-based Localization}

{In this subsection, we compare the performance of the proposed time inversion algorithm with the brute force (BF) grid search, MUSIC algorithm~\cite{10149471}, and STT method~\cite{yuanwei}. The numbers of baseband samples for the three algorithms are set as $N_{\rm BS}/2$, $N_{\rm BS}/2$, and $N_{\rm BS}N_{\rm UE}T_s$, respectively, where $T_s$ denotes the number of ping-pong pilot iterations for the STT method}, and the near-field region is divided into the same number of grids. Particularly, the number of searching grids on the $x$-axis is set as $G_x = \eta N_{\rm BS}$, while the searching grids on the $y$-axis are set as $G_y = 20\eta$, where $\eta \in \{1,2,3\}$ denotes the oversampling rate of searching grids. In contrast, owing to the FFT implementation of the proposed algorithm, the number of searching grids on the $x$-axis is fixed as $G_x = \lceil 2(x_H-x_L)/(N_{\rm BS}\lambda) \rceil N_{\rm BS} \approx 8N_{\rm BS}$, while is identical to the baselines on the $y$-axis as $G_y = 20\eta$. As can be observed from Fig.~\ref{fig:loc_acc}(a), the cumulative distribution functions (CDFs) of localization errors $\epsilon = \sum_{\ell=0}^{L}\Vert \hat{\bf r}_{\rm S}^{(\ell)} - {\bf r}_{\rm S}^{(\ell)}  \Vert/L$ achieved by four algorithms participating in the comparison are comparable. However, the proposed time inversion algorithm entails two practical advantages over baseline schemes.

\begin{figure}[t]
	\centering
	\includegraphics[width=0.45\textwidth]{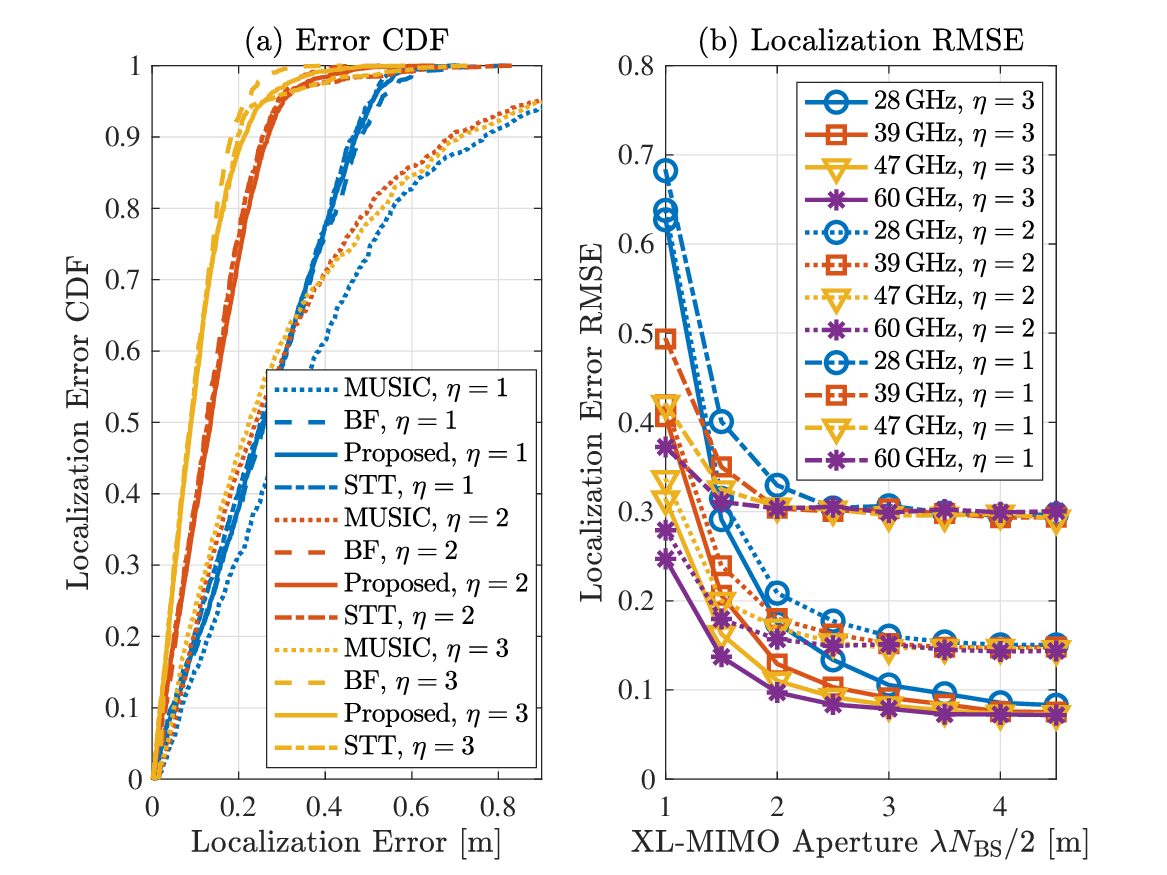}
	\caption{(a) The CDF of localization error of the considered algorithms with varying oversampling rate $\eta$ and (b) RMSE of the proposed localization algorithm with varying carrier frequencies and XL-MIMO aperture.}
	\label{fig:loc_acc}
	\vspace{-2mm}
\end{figure}

First, note that all three baseline 
algorithms require baseband samples, which incurs huge sampling overhead and power consumption in XL-MIMO systems. In contrast, the proposed time-inversion localization algorithm does not need any baseband sampling via high-frequency ADCs. Instead, as mentioned in Section~\ref{sec:arch}, acquiring power sampling via power sensors is more energy-efficient and economical. Second, the computational complexity of the proposed time inversion algorithm is far lower than that of the two baselines, as is listed in Table~\ref{tab:loc}. {For instance, according to the values of $G_x$ and $G_y$ set for different localization algorithms in this subsection, their specific computational complexity is given by $\mathcal{O}(\eta^2 N_{\rm BS}^4)$, $\mathcal{O}(\eta^2 N_{\rm BS}^2)$, $\mathcal{O}(\eta^2N_{\rm BS}^2)$, and $\mathcal{O}(\eta N_{\rm BS} \log( N_{\rm BS}))$, respectively.} It clearly shows that the proposed algorithm has the lowest complexity among the comparison algorithms even though it employs the largest number of grids on the $x$-axis, and this can also be validated by numerical results of the average runtime 
in Table~\ref{tab:loc}. In particular, the average runtime of the proposed time inversion algorithm is tens and hundreds of times lower than that of the {BF} grid search and MUSIC algorithm, respectively. Therefore, these two merits of the time inversion algorithm position it an excellent candidate for energy-efficient low-complexity localization algorithm for near-field XL-MIMO systems.

Fig.~\ref{fig:loc_acc}(b) plots the RMSE of our proposed localization algorithm with varying carrier frequencies standardized by 3GPP~\cite{3gpp.38.101}. As can be observed, a higher frequency results in a more accurate localization, and the localization error asymptotically converges as the number of XL-MIMO antenna elements $N_{\rm BS}$ increases. It was proved in Propositions~\ref{prop:one_max} and~\ref{prop:RiemannLebesgue} that as $\kappa\to\infty$ and $N_{\rm BS}\to\infty$, the peaks of $\tilde{E}({\bf r})$ will coincide the locations of the sources, i.e., ${\bf r} = {\bf r}_{\rm S}^{(\ell)}$, and the interference terms asymptotically vanish elsewhere. Therefore, Fig.~\ref{fig:loc_acc}(b) validates that the localization accuracy is improved with a smaller interference term and more prominent peaks induced by a higher carrier frequency and proliferation of antennas. 

\subsection{Sensing-Enhanced CE}
\label{sec:results_CE}
\begin{figure}[t]
	\centering
	\includegraphics[width=0.3875\textwidth]{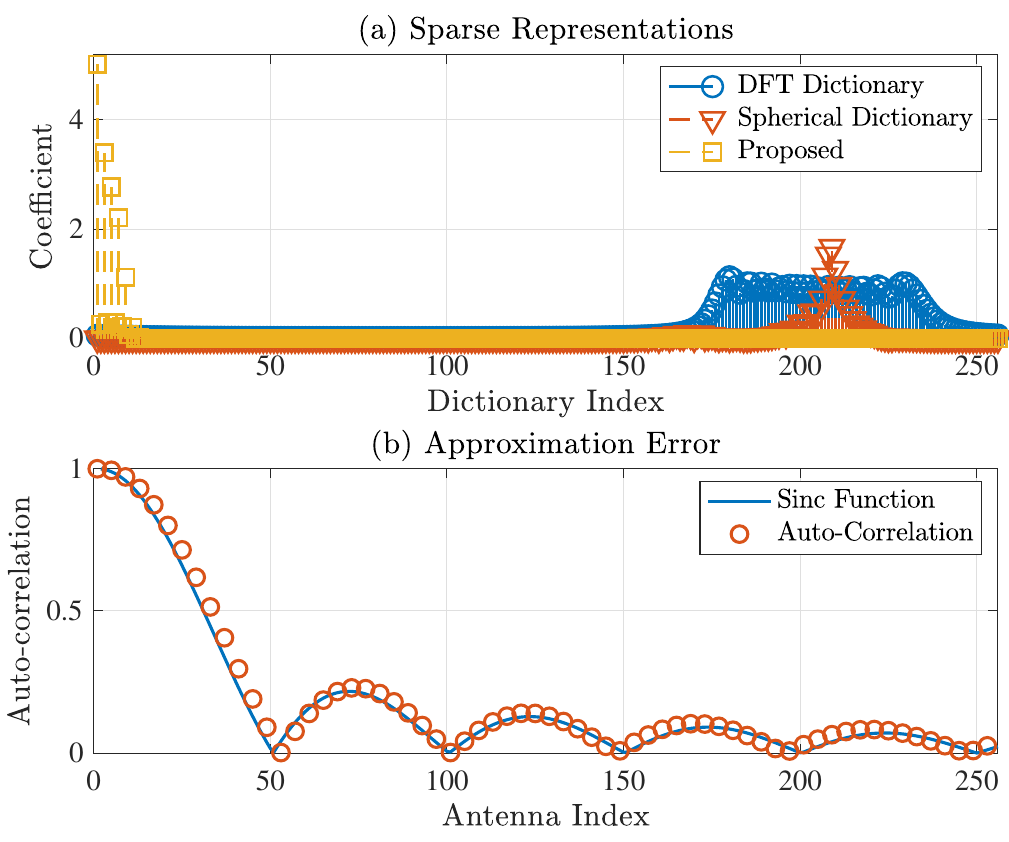}
	\caption{(a) The sparse representations of near-field channels under different dictionaries and (b) the approximation error in~\eqref{eq:toeplitzmat}.}
	\label{fig:res1}
\end{figure}
For CE performance comparison, we mainly consider two benchmark dictionaries in the simulation as follows:
\begin{itemize}
	\item \textbf{DFT dictionary}: The DFT dictionary is designed according to~\cite{6717211}, where the numbers of angle grids are set as $\beta N_{\rm BS}$ and $\beta N_{\rm UE}$ at the BS and UE side, respectively, with $\beta$ being the oversampling rate of the dictionary. Hence, the size of the DFT dictionary is $\beta^2N_{\rm BS}N_{\rm UE}$. 
	\item \textbf{Spherical wave dictionary}: The numbers of grids for distance and angle are set as $\nint{\beta\sqrt{N_{\rm BS}}}$ and $\nint{\beta\sqrt{N_{\rm UE}}}$ at the BS and UE, respectively. The angle grids are selected evenly from $-\pi/2$ to $\pi/2$, {while the reciprocal of distance grids at angle $\theta$ are evenly selected from $[(1-\cos^2\theta)/\sqrt{x_H^2+y_H^2},(1-\cos^2\theta)/y_L]$~\cite{9693928}. Therefore, the size of the spherical wave dictionary is $(\nint{\beta\sqrt{N_{\rm BS}}} \nint{\beta\sqrt{N_{\rm UE}}})^2\approx \beta^4 N_{\rm BS}N_{\rm UE}$.}
\end{itemize}

In contrast, the size of the proposed DPSS-based dictionary derived in Section~\ref{sec:ce} is $N_{\rm BS} N_{\rm UE}+L$. {Additionally, the compression ratio (CR) of CS-based CE problem is defined as $\mu = \tau/(N_{\rm UE}N_{\rm BS})$ and therefore the required number of baseband samples is $\nint{\tau N_{\rm BS}^{\rm RF}}=\nint{\mu N_{\rm BS}^{\rm RF}N_{\rm UE}N_{\rm BS}}$.} For fair comparison, the performance of CE achieved by all dictionaries is evaluated based on the OMP algorithm.

We firstly investigate the sparsification performance of the proposed dictionary. The channel sparse representations of the DFT dictionary, spherical wave dictionary, and proposed dictionary are compared in Fig.~\ref{fig:res1}(a), where the sparse representation is obtained by $\tilde{\bf h} = \boldsymbol{\Psi}^\dagger {\bf h}$. As can be observed, the conventional DFT dictionary shows a severe energy leakage effect in the near-field region~\cite{9693928}, which can be improved by the spherical wave dictionary sampled in the polar-domain. Meanwhile, the proposed DPSS-based eigen-dictionary entails the sparsest pattern among the three dictionaries. Note that the proposed dictionary is compensated by the matrix ${\bf D}_{{\rm BS}({\rm UE})}$ in~\eqref{eq:compensation} and therefore, the sparse representation contains no specific angular information. In addition, different from the DFT and spherical dictionaries, the non-zero support set appears in the first several indices since the SVD always sorts the non-zero singular values first. 
We further validate the approximation error of the derivation procedure in \eqref{eq:toeplitzmat}. As is depicted in Fig.~\ref{fig:res1}(b), the auto-correlation curve stands for the absolute value of $\overline{\bf R}_{\rm BS}[1,:]$, while the red circles show the value of the normalized sinc function. The approximation procedure shows negligible error, which confirms the high accuracy of our proposed approximation $(b)$ in~\eqref{eq:toeplitzmat}.

\begin{figure}[t]
	\centering
	\includegraphics[width=0.4\textwidth]{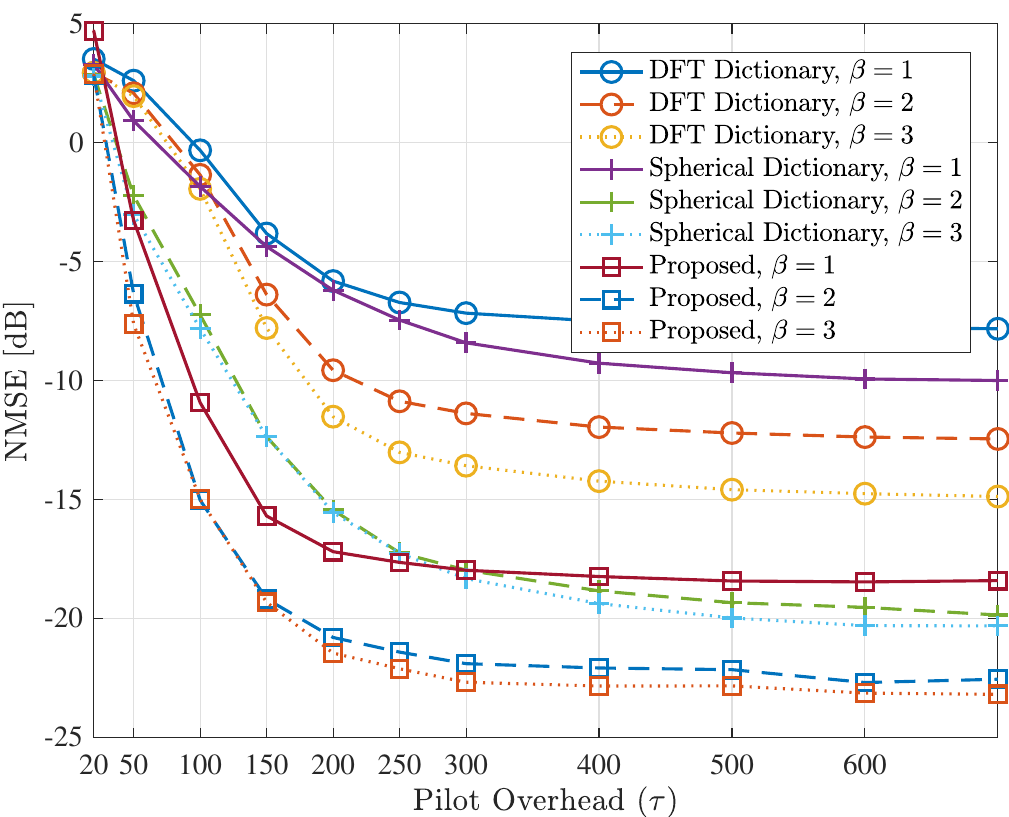}
	\caption{Performance comparison of near-field CE error versus pilot length $\tau$ and dictionary oversampling rate $\beta$ when $I = 30$.}
	\label{fig:res_overhead_1}
\end{figure}
\begin{figure}[t]
	\centering
	\includegraphics[width=0.4\textwidth]{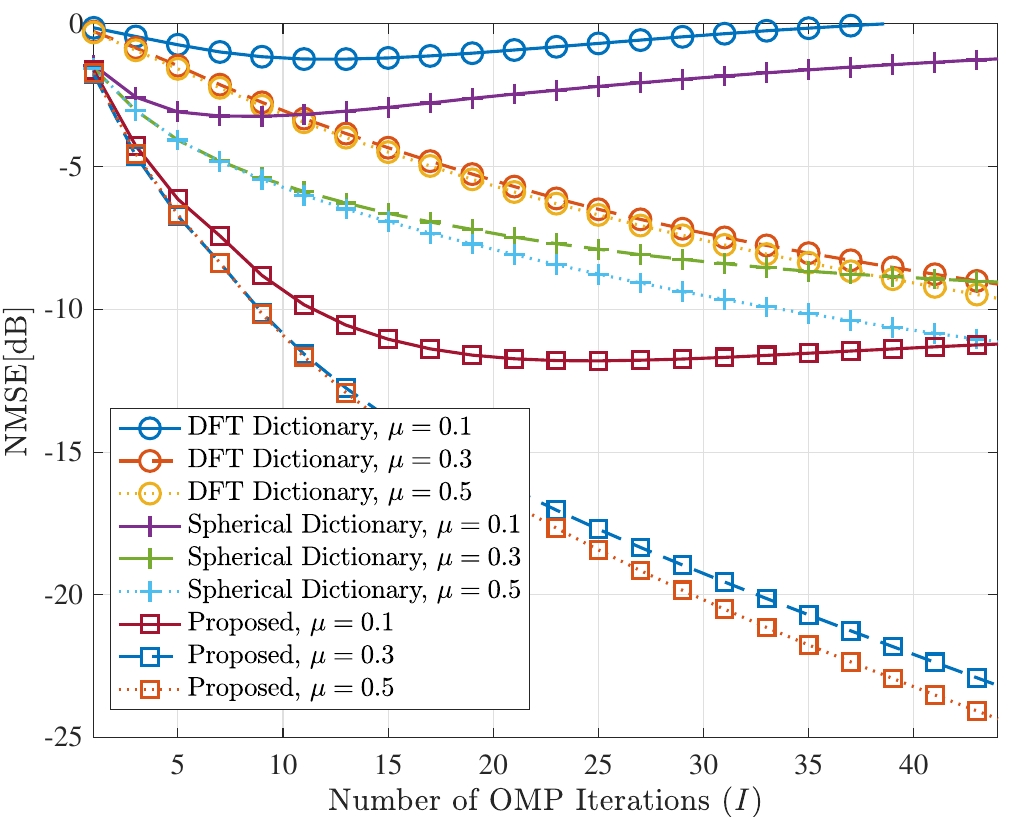}
	\caption{Performance comparison of near-field CE error versus the number of iterations $I$ and CR $\mu$ when $\beta=1$.}
	\label{fig:res2}
\end{figure}

{	
We then investigate the CE accuracy performance with pilot overhead $\tau$. As shown in Fig.~\ref{fig:res_overhead_1}, the NMSE performance of CE converges as the pilot length $\tau$ increases and the proposed schemes achieve a lower estimation error than baseline algorithms. Besides, the proposed method converges at around $\tau=300$, while the other baselines converge at around $\tau=500$. Furthermore, to achieve a CE NMSE of $-10\,{\rm dB}$, the proposed method saves up to $85.7\%$ of the pilot overhead when compared to the spherical dictionary with $\beta=1$. In addition, to achieve a CE error of $-15\,{\rm dB}$, the proposed method saves up to $50\%$ and $85.7\%$ of the pilot overhead compared to DFT and spherical dictionaries, respectively, with $\beta=3$. These results clearly demonstrate that the proposed schemes not only efficiently reduce the pilot overhead but also achieve the best estimation error with the aid of the sensing phase.

We next investigate the CE performance against the number of iterations $I$. With the reference of Fig.~\ref{fig:res_overhead_1} and recall that $\mu = \tau/N_{\rm BS}N_{\rm UE}$, we set $\mu \in \{0.1, 0.3, 0.5\}$ for further comparisons.} We firstly set the oversampling rate $\beta$ to $1$ to keep the sizes of the three considered dictionaries identical. 
As shown in Fig.~\ref{fig:res2}, both the schemes with the proposed and the spherical wave dictionaries start from lower estimation errors than the DFT dictionary, which validates the alignment of the proposed eigen-dictionary and the near-field spherical wave model. 
{For cases with $\mu = 0.1$, the under-determined observation causes severe upward warping phenomenon as shown at the right side of the curves, while the proposed method still achieves satisfying NMSE performance. For cases with $\mu>0.1$,} 
the convergence of the spherical wave dictionary experiences a deceleration due to the insufficient distance sampling interval, while the other dictionaries converge continuously. Additionally, thanks to the accurate localization information incorporated in the DPSS-based dictionary, the proposed CE scheme shows the highest estimation accuracy. 
Recall that the number of baseband samples is equal to $\nint{\mu N_{\rm BS}^{\rm RF}N_{\rm UE}N_{\rm BS}}$, which is proportional to $\mu$.
Hence, this observation implies that our proposal is more efficient in CE when utilizing the same number of ADC resources, e.g., $307$ samples when $\mu=0.3$, at the baseband. 
Finally, the achievable NMSE of the proposed method is almost identical with varying CRs $\mu$.
This phenomenon indicates that the proposed sensing-enhanced CE has great potential to alleviate the burden of baseband sampling in CE. In other words, in practical systems, only a small number of baseband samples, i.e., $0.1N_{\rm BS}N_{\rm UE}$, is required to achieve a satisfactory performance in terms of NMSE.

\begin{figure}[t]
	\centering
	\includegraphics[width=0.4\textwidth]{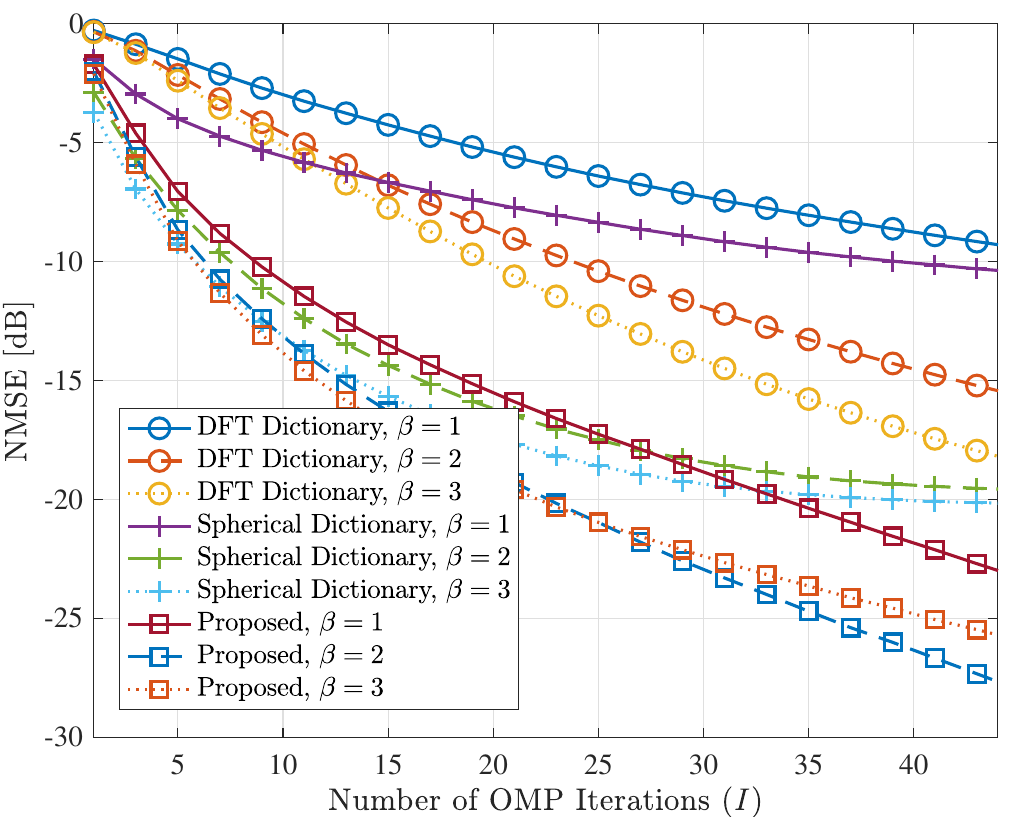}
	\caption{Performance comparison of near-field CE error versus the number of iterations $I$ and dictionary oversampling rate $\beta$ when $\mu = 0.4$.}
	\label{fig:res3}
\end{figure}
\begin{figure}[t]
	\centering
	\includegraphics[width=0.4\textwidth]{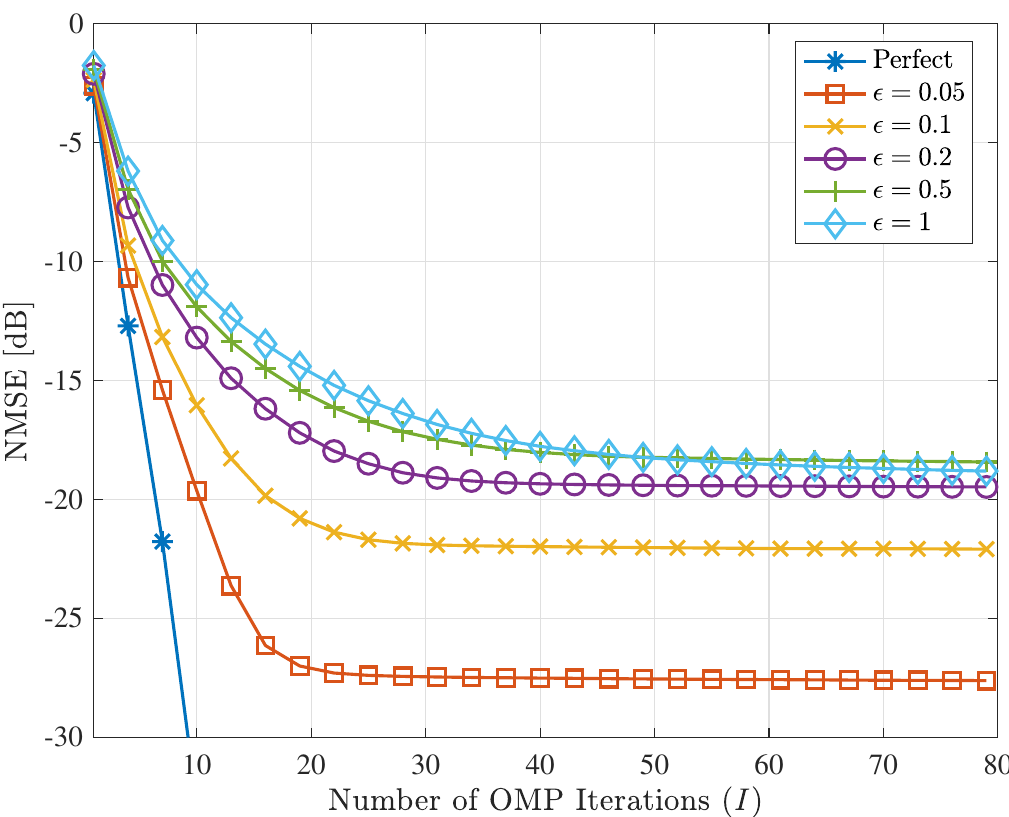}
	\caption{Robustness performance of proposed sensing-enhanced CE scheme with the DPSS-based eigen-dictionary versus different localization errors $\epsilon$ (m).}
	\label{fig:res4}
\end{figure}

We also evaluate the CE performance with oversampling rates $\beta\in\{1,2,3\}$. The CR is set to $\mu=0.4$ to maintain the same number of baseband samples. As shown in Fig.~\ref{fig:res3}, we observe significant performance improvement for all schemes by increasing $\beta$, while the proposed DPSS-based dictionary still achieves the highest reconstruction accuracy within sufficient iterations. 
Recall that the sizes of the DFT and spherical dictionaries scale up with $\beta$ and therefore more codewords can be included in these two dictionaries for enhancing the CE performance. 
In contrast, the increase in the oversampling rate $\beta$ only affects the localization accuracy in the sensing stage of our proposed CE scheme while the size of the DPSS-based dictionary remains unchanged as $N_{\rm BS}N_{\rm UE}+L$. In other words, thanks to the accurate localization information involved in the DPSS-based dictionary, the  proposed sensing-enhanced CE scheme tremendously outperforms those of two baselines even with a much smaller dictionary size.

We further investigate the robustness of the proposed dictionary versus different localization errors $\epsilon$ in Fig.~\ref{fig:res4}. The estimated coordinates are assumed to be uniformly distributed within a circular area of radius $\epsilon$. The values of $\epsilon$ in Fig.~\ref{fig:res4} are selected according to the typical localization error shown in Fig.~\ref{fig:loc_acc}(a) and the recent near-field localization field-test~\cite{9707730}. As can be observed, the proposed DPSS-based dictionary ensures the convergence of the OMP algorithm within the considered error levels up to $\epsilon=0.5\,{\rm m}$, while more OMP iterations are required for a larger value of $\epsilon$. Besides, the converged NMSE performance degrades as $\epsilon$ increases, while the converged NMSE still achieves a considerable level more accurate than $-15\,{\rm dB}$. On the one hand, this result demonstrates the robustness of the proposed CE scheme against the localization error. On the other hand, it also confirms the importance of accurate localization in near-field CE, which justifies our motivation of the sensing-enhanced CE proposal.

\subsection{Storage and Baseband Sample Analysis}
\begin{table}[t]
	\renewcommand{\arraystretch}{1.05}
	\caption{Minimum Required Baseband Samples and Dictionary Size for Target NMSE.}\label{tab:combined}
	\setlength{\tabcolsep}{2pt}
	\centering
	\begin{tabular}{c|cccccc}
		\hline\hline
		\multirow{3}{*}{\bf Dictionaries} & \multicolumn{3}{c|}{\bf Baseband Samples}                                          & \multicolumn{3}{c}{\bf Dictionary Size}                       \\ \cline{2-7} 
		& \multicolumn{6}{c}{\bf \!\!\!\!\!\!\!\!Target NMSE}                                                                                                            \\ \cline{2-7} 
		& \multicolumn{1}{c}{$\mhyphen 15\:$dB} & \multicolumn{1}{c}{$\mhyphen 20\:$dB} & \multicolumn{1}{c|}{$\mhyphen 25\:$dB} & \multicolumn{1}{c}{$\mhyphen 15\:$dB} & \multicolumn{1}{c}{$\mhyphen 20\:$dB} & $\mhyphen 25\:$dB \\ \hline
		DFT                           & \multicolumn{1}{c}{$256$}   & \multicolumn{1}{c}{$358$}   & \multicolumn{1}{c|}{$410$}   & \multicolumn{1}{c}{$2,\!304$}   & \multicolumn{1}{c}{$4,\!096$}   & $4,\!096$   \\ \hline
		Spherical                     & \multicolumn{1}{c}{$205$}   & \multicolumn{1}{c}{$358$}   & \multicolumn{1}{c|}{N/A}   & \multicolumn{1}{c}{$1,\!620$}   & \multicolumn{1}{c}{$4,\!096$}   & N/A     \\ \hline
		Proposed                     & \multicolumn{1}{c}{$\bf 154$}   & \multicolumn{1}{c}{$\bf 174$}   & \multicolumn{1}{c|}{$\bf 205$}   & \multicolumn{1}{c}{$1,\!024\!+\!L$}   & \multicolumn{1}{c}{$1,\!024\!+\!L$}   & $1,\!024\!+\!L$    \\ \hline\hline
	\end{tabular}
\end{table}
In Table~\ref{tab:combined}, we compare the minimum required baseband samples and dictionary size, i.e., the number of codewords, to achieve the NMSE targets $\{-15, -20, -25\}\,{\rm dB}$. For the former comparison we fix the oversampling rate as $\beta=2$, while for the latter comparison we set CR $\mu = 0.4$, i.e., $410$ baseband samples. As can be observed, the required baseband samples keep increasing with the target NMSE, while the proposed DPSS-based dictionary achieves the fewest required baseband samples. Besides, the sizes of the DFT and spherical dictionaries continue to increase with higher NMSE requirements, while the size of the proposed dictionary remains constant. Additionally, the $-25\,{\rm dB}$ NMSE cannot be achieved either by increasing the baseband samples or by enlarging the sizes of the spherical wave dictionary, and the corresponding entries are denoted as ${\rm N/A}$. 

Although the spherical wave dictionary achieves faster convergence during the initial iterations, the convergence performance is not as good as the other dictionaries due to the insufficient distance sampling interval and mutual correlation among codewords. 
Besides, the two DoFs in both distance and angle of the spherical wave dictionary dramatically add to the dictionary size as the resolution requirement increases. Therefore, the spherical wave dictionary requires the largest number of baseband samples and codewords to reach the target NMSE. On the contrary, thanks to the mutual orthogonality among codewords, the DFT dictionary can satisfy more stringent NMSE requirements with smaller sizes than the spherical ones. Yet, its mismatch with the near-field channel model still leads to a bulkier dictionary compared to the proposed DPSS-based one.

In contrast, thanks to the prior localization information integrated in the proposed dictionary, it is able to reach the NMSE targets with a drastically smaller number of baseband samples. 
For example, to achieve a $-20\,$dB NMSE, the required baseband samples show a $51\%$ reduction compared to the DFT and spherical dictionaries, respectively.
Furthermore, compared to the DFT and spherical wave dictionaries, the proposed DPSS-based dictionary does not need to sacrifice the NMSE performance for a lower storage, which is exemplified by a $66\%$ reduction in dictionary size for a given $-20\,$dB target NMSE.

\section{Conclusion}

In this paper, we introduced a cost-effective XL-MIMO transceiver architecture measuring the power of the incident wavefront, based on which a time inversion near-field localization algorithm was proposed. The proposed localization scheme achieves accurate localization performance compared to the widely-adopted algorithms without baseband sampling through power-hungry ADCs, and is computationally efficient thanks to the FFT implementation. To fully exploit the estimated locations, we further proposed a novel DPSS-based eigen-dictionary for near-field XL-MIMO CE. By leveraging the EVD associated with the near-field channel, the proposed DPSS-based dictionary outperforms the conventional DFT and polar-domain spherical wave dictionaries in channel sparsification, NMSE in CE, storage requirements, and more importantly, the required baseband samples. The proposed sensing-enhanced CE approach presents an ideal solution for a low-complexity yet efficient CE scheme with minimized hardware requirements and training overhead.

\appendices
\section{Proof of Proposition \ref{prop:RiemannLebesgue}}
\label{sec:appendix2}
We firstly treat the BS array as a continuous aperture array and respectively define the distance from the $m$-th element on the UE array ${\bf r}_{\rm UE}^{(m)}$ and the target point ${\bf r}$ to the point $(t,0)$ on the BS array as
\begin{align}
	r_1(t) &= \Vert {\bf r}_{\rm UE}^{(m)}-{\bf r}_{\rm BS}^{(n)} \Vert = \sqrt{(x_{\rm UE}^{(m)}-t)^2+(y_{\rm UE}^{(m)})^2},\\
	r_2(t) &= \Vert {\bf r}-{\bf r}_{\rm BS}^{(n)} \Vert = \sqrt{(x-t)^2+y^2},
\end{align}
where $t\in{\mathcal D} = [-(N_{\rm BS}-1)\lambda/4,(N_{\rm BS}-1)\lambda/4]$, and $f(t) = r_1(t)-r_2(t)$ is continuous on ${\mathcal D}$. The desired term $\tilde{E}_\ell^{\rm d} ({\bf r})$ in~\eqref{eq:desvanish} then yields
\begin{equation}
	\begin{aligned}
		\tilde{E}_\ell^{\rm d} ({\bf r};\kappa) &= \sum_{n=1}^{N_{\rm BS}} e^{-\jmath\kappa \Vert {\bf r}_{\rm UE}^{(m)}-{\bf r}_{\rm BS}^{(n)} \Vert + \jmath\kappa \Vert {\bf r}-{\bf r}_{\rm BS}^{(n)} \Vert}\\ 
		& \overset{(c)}{\simeq}\int_{\mathcal{D}} e^{-\jmath\kappa f(t)}~{\rm d}t = \int_{\mathbb{R}} p_{x}(x) e^{-\jmath\kappa x}~{\rm d}x,
	\end{aligned}
	\label{eq:dftt}
\end{equation}
where $(c)$ asymptotically holds as $N_{\rm BS}\to\infty$, $p_x(x) = \frac{2}{(N_{\rm BS}-1)\lambda}\vert\frac{\partial f^{-1}(x)}{\partial x} \vert$ is the probability density function of $f(t)$ over $t\in\mathcal{D}$, and is continuous over $\mathbb{R}$. Substitute the integrated variable in~\eqref{eq:dftt} by $x^\prime = x+\pi/\kappa$ then yields
\begin{equation}
	\begin{aligned}
		\int_{\mathbb{R}} p_{x}(x^\prime) e^{-\jmath\kappa x^\prime}~{\rm d}x^\prime&= \int_{\mathbb{R}} p_{x}\left(x+\frac{\pi}{\kappa}\right) e^{-\jmath\kappa x}e^{-\jmath\pi}~{\rm d}x \\ 
		&= -\int_{\mathbb{R}} p_{x}\left(x+\frac{\pi}{\kappa}\right) e^{-\jmath\kappa x}~{\rm d}x.
	\end{aligned}
	\label{eq:dfttt}
\end{equation}
Taking the average of~\eqref{eq:dftt} and~\eqref{eq:dfttt}, we obtain
\begin{equation}
	\begin{aligned}
		\left\vert \tilde{E}_\ell^{\rm d} ({\bf r};\kappa) \right\vert & \overset{(c)}{\simeq} \left\vert \frac{1}{2}\int_{\mathbb{R}} \left( p_x(x)- p_{x}\left(x+\frac{\pi}{\kappa}\right)  \right)e^{-\jmath\kappa x}~{\rm d}x \right\vert\\
		&\leq\frac{1}{2}\int_{\mathbb{R}} \left\vert  p_{x}(x) - p_{x}\left(x+\frac{\pi}{\kappa}\right)  \right\vert~{\rm d}x.
	\end{aligned}
\end{equation}
Given that $p_x(x)$ is continuous, $\vert  p_{x}(x) - p_{x}\left(x+\frac{\pi}{\kappa}\right) \vert$ will converge to $0$ as $\vert\kappa\vert\rightarrow\infty$ for all $r\in \mathbb{R}$, therefore $\vert \tilde{E}_\ell^{\rm d} ({\bf r};\kappa)\vert$ will vanish to $0$.

The interference term can be formulated as
\begin{equation}
	\begin{aligned}
		\tilde{E}_m^{{\rm i}}({\bf r}) &=\sum_{n=1}^{N_{\rm BS}}\frac{e^{-2\jmath \xi_n + \jmath\kappa\left( \Vert {\bf r}_{\rm S}^{(\ell)}-{\bf r}_{\rm BS}^{(n)} \Vert+ \Vert {\bf r}-{\bf r}_{\rm BS}^{(n)} \Vert\right) }}{d_0^{(\ell,n)}}\\
		&=\sum_{n=1}^{N_{\rm BS}}\frac{e^{ \jmath {\phi}^{(\ell,n)}_{\bf r} }}{d_0^{(\ell,n)}} e^{-2\jmath{\xi}_n },
		\label{eq:newintf}
	\end{aligned}
\end{equation}
where ${\phi}^{(\ell,n)}_{ \bf r} = \kappa ( \Vert {\bf r}_{\rm T}^{(m)}-{\bf r}_{\rm BS}^{(n)} \Vert + \Vert {\bf r}-{\bf r}_{\rm BS}^{(n)} \Vert)$ is the overall phase at the location of near-field coordinate ${\bf r}$. The expectation of $\tilde{E}_m^{{\rm i}}({\bf r})$ can then be given as
\begin{equation}
	\begin{aligned}
		&\mathbb{E}\left[\tilde{E}_m^{{\rm i}}({\bf r}) \right] \\ ={}&\mathbb{E}\left[\sum_{n=1}^{N_{\rm BS}} \left( \frac{ e^{\jmath {\phi}^{(\ell,n)}_{ {\bf r}} }}{d_0^{(\ell,n)}} \right) \left( \cos\left( 2{\xi}_n \right)+\jmath\sin \left( 2{\xi}_n \right) \right)\right].
	\end{aligned}
\end{equation}
If the phase shift of reference wave is drawn from a uniform distribution, i.e., $\xi_n\sim {\mathcal{U}}[0,\pi]$, the interference term will vanish to $0$ asymptotically, as $\mathbb{E}[\cos(2\xi_n)]=0$ and $\mathbb{E}[\sin(2\xi_n)]=0$ hold for $N_{\rm BS}\rightarrow \infty$\cite{tagoram}.

Because $\mathbf{e}_\mathrm{n}$ is an AWGN vector, ${\rm Re}\left(\mathbf{e}_\mathrm{n}[n]\right)$ is a Gaussian random variable with zero mean. When $N_\mathrm{BS}\to\infty$, the noise term $\tilde{E}_\ell^{\rm n}({\bf r}) \to 0$ according to the law of large numbers.
\qed
\bibliographystyle{IEEEtran}
\bibliography{IEEEabrv,references}

\begin{thebibliography}{10}
\providecommand{\url}[1]{#1}
\csname url@samestyle\endcsname
\providecommand{\newblock}{\relax}
\providecommand{\bibinfo}[2]{#2}
\providecommand{\BIBentrySTDinterwordspacing}{\spaceskip=0pt\relax}
\providecommand{\BIBentryALTinterwordstretchfactor}{4}
\providecommand{\BIBentryALTinterwordspacing}{\spaceskip=\fontdimen2\font plus
\BIBentryALTinterwordstretchfactor\fontdimen3\font minus
  \fontdimen4\font\relax}
\providecommand{\BIBforeignlanguage}[2]{{%
\expandafter\ifx\csname l@#1\endcsname\relax
\typeout{** WARNING: IEEEtran.bst: No hyphenation pattern has been}%
\typeout{** loaded for the language `#1'. Using the pattern for}%
\typeout{** the default language instead.}%
\else
\language=\csname l@#1\endcsname
\fi
#2}}
\providecommand{\BIBdecl}{\relax}
\BIBdecl

\bibitem{ICC24Liu2}
S.~Liu, X.~Yu, Z.~Gao, and D.~W.~K. Ng, ``{DPSS}-based dictionary design for
  near-field {XL-MIMO} channel estimation,'' in \emph{Proc. IEEE Int. Conf.
  Commun. (ICC)}, Denvor, CO, USA, Jun. 2024.
\renewcommand{\BIBentryALTinterwordstretchfactor}{4}

\bibitem{9690635}
X.~Yu, V.~Jamali, D.~Xu, D.~W.~K. Ng, and R.~Schober, ``Smart and
  reconfigurable wireless communications: From {IRS} modeling to algorithm
  design,'' \emph{{IEEE} Wireless Commun.}, vol.~28, no.~6, pp. 118--125, Dec.
  2021.

\bibitem{7827017}
H.~Q. Ngo, A.~Ashikhmin, H.~Yang, E.~G. Larsson, and T.~L. Marzetta,
  ``Cell-free massive {MIMO} versus small cells,'' \emph{{IEEE} Trans. Wireless
  Commun.}, vol.~16, no.~3, pp. 1834--1850, Mar. 2017.

\bibitem{10098681}
Z.~Wang \emph{et~al.}, ``Extremely large-scale {MIMO}: Fundamentals,
  challenges, solutions, and future directions,'' \emph{{IEEE} Wireless
  Commun.}, pp. 1--9, Apr. 2023.

\bibitem{9139337}
D.~Dardari, ``Communicating with large intelligent surfaces: Fundamental limits
  and models,'' \emph{{IEEE} J. Sel. Areas Commun.}, vol.~38, no.~11, pp.
  2526--2537, Nov. 2020.

\bibitem{9738442}
H.~Zhang, N.~Shlezinger, F.~Guidi, D.~Dardari, M.~F. Imani, and Y.~C. Eldar,
  ``Beam focusing for near-field multiuser {MIMO} communications,''
  \emph{{IEEE} Trans. Wireless Commun.}, vol.~21, no.~9, pp. 7476--7490, Sept.
  2022.

\bibitem{9620081}
H.~Lu and Y.~Zeng, ``Near-field modeling and performance analysis for
  multi-user extremely large-scale {MIMO} communication,'' \emph{{IEEE} Commun.
  Lett.}, vol.~26, no.~2, pp. 277--281, Feb. 2022.

\bibitem{6717211}
O.~E. Ayach, S.~Rajagopal, S.~Abu-Surra, Z.~Pi, and R.~W. Heath, ``Spatially
  sparse precoding in millimeter wave {MIMO} systems,'' \emph{{IEEE} Trans.
  Wireless Commun.}, vol.~13, no.~3, pp. 1499--1513, Mar. 2014.

\bibitem{7400949}
R.~W. Heath, N.~González-Prelcic, S.~Rangan, W.~Roh, and A.~M. Sayeed, ``An
  overview of signal processing techniques for millimeter wave {MIMO}
  systems,'' \emph{{IEEE} J. Sel. Topics Signal Process.}, vol.~10, no.~3, pp.
  436--453, Apr. 2016.

\bibitem{10153711}
X.~Yu, W.~Shen, R.~Zhang, C.~Xing, and T.~Q.~S. Quek, ``Channel estimation for
  {XL-RIS}-aided millimeter-wave systems,'' \emph{{IEEE} Trans. Commun.},
  vol.~71, no.~9, pp. 5519--5533, Sept. 2023.

\bibitem{9547795}
Y.~Zhu, H.~Guo, and V.~K.~N. Lau, ``Bayesian channel estimation in multi-user
  massive {MIMO} with extremely large antenna array,'' \emph{{IEEE} Trans.
  Signal Process.}, vol.~69, pp. 5463--5478, Sept. 2021.

\bibitem{8949454}
Y.~Han, S.~Jin, C.-K. Wen, and X.~Ma, ``Channel estimation for extremely
  large-scale massive {MIMO} systems,'' \emph{{IEEE} Wireless Commun. Lett.},
  vol.~9, no.~5, pp. 633--637, May 2020.

\bibitem{9693928}
M.~Cui and L.~Dai, ``Channel estimation for extremely large-scale {MIMO}:
  Far-field or near-field?'' \emph{{IEEE} Trans. Commun.}, vol.~70, no.~4, pp.
  2663--2677, Apr. 2022.

\bibitem{10195974}
J.~Chen, F.~Gao, M.~Jian, and W.~Yuan, ``Hierarchical codebook design for
  near-field mmwave {MIMO} communications systems,'' \emph{{IEEE} Wireless
  Commun. Lett.}, vol.~12, no.~11, pp. 1926--1930, Nov. 2023.

\bibitem{10217152}
X.~Shi, J.~Wang, Z.~Sun, and J.~Song, ``Spatial-chirp codebook-based
  hierarchical beam training for extremely large-scale massive {MIMO},''
  \emph{{IEEE} Trans. Wireless Commun.}, vol.~23, no.~4, pp. 2824--2838, Apr.
  2024.

\bibitem{1614066}
D.~L. Donoho, ``Compressed sensing,'' \emph{{IEEE} Trans. Inf. Theory},
  vol.~52, no.~4, pp. 1289--1306, Apr. 2006.

\bibitem{9848802}
L.~Ding, E.~G. Ström, and J.~Zhang, ``Degrees of freedom in 3{D} linear
  large-scale antenna array communications—{A} spatial bandwidth approach,''
  \emph{{IEEE} J. Sel. Areas Commun.}, vol.~40, no.~10, pp. 2805--2822, Oct.
  2022.

\bibitem{10273424}
X.~Zhang, H.~Zhang, and Y.~C. Eldar, ``Near-field sparse channel representation
  and estimation in 6{G} wireless communications,'' \emph{{IEEE} Trans.
  Commun.}, vol.~72, no.~1, pp. 450--464, Jan. 2024.

\bibitem{9913211}
Y.~Zhang, X.~Wu, and C.~You, ``Fast near-field beam training for extremely
  large-scale array,'' \emph{{IEEE} Wireless Commun. Lett.}, vol.~11, no.~12,
  pp. 2625--2629, Dec. 2022.

\bibitem{10188491}
F.~Liu \emph{et~al.}, ``Seventy years of radar and communications: The road
  from separation to integration,'' \emph{{IEEE} Signal Process. Mag.},
  vol.~40, no.~5, pp. 106--121, July 2023.

\bibitem{9707730}
A.~Sakhnini, S.~De~Bast, M.~Guenach, A.~Bourdoux, H.~Sahli, and S.~Pollin,
  ``Near-field coherent radar sensing using a massive {MIMO} communication
  testbed,'' \emph{{IEEE} Trans. Wireless Commun.}, vol.~21, no.~8, pp.
  6256--6270, Aug. 2022.

\bibitem{9508850}
A.~Guerra, F.~Guidi, D.~Dardari, and P.~M. Djurić, ``Near-field tracking with
  large antenna arrays: Fundamental limits and practical algorithms,''
  \emph{{IEEE} Trans. Signal Process.}, vol.~69, pp. 5723--5738, Aug. 2021.

\bibitem{10149471}
Y.~Pan, C.~Pan, S.~Jin, and J.~Wang, ``{RIS}-aided near-field localization and
  channel estimation for the {T}erahertz system,'' \emph{{IEEE} J. Sel. Topics
  Signal Process.}, vol.~17, no.~4, pp. 878--892, July 2023.

\bibitem{yuanwei}
H.~Jiang, Z.~Wang, and Y.~Liu, ``Sense-then-train: An active-sensing-based beam
  training design for near-field {MIMO} systems,'' \emph{{IEEE} Trans. Wireless
  Commun.}, vol.~23, no.~10, pp. 15\,525--15\,539, Oct. 2024.

\bibitem{ren2023sensingassisted}
\BIBentryALTinterwordspacing
Z.~Ren, L.~Qiu, J.~Xu, and D.~W.~K. Ng, ``Sensing-assisted sparse channel
  recovery for massive antenna systems,'' {\em arXiv preprint}, 2023. [Online].
  Available: \url{https://arxiv.org/abs/2311.05907}
\BIBentrySTDinterwordspacing

\bibitem{9314267}
J.~Yang, Y.~Zeng, S.~Jin, C.-K. Wen, and P.~Xu, ``Communication and
  localization with extremely large lens antenna array,'' \emph{{IEEE} Trans.
  Wireless Commun.}, vol.~20, no.~5, pp. 3031--3048, May 2021.

\bibitem{9723331}
E.~Bj\"ornson, O.~T. Demir, and L.~Sanguinetti, ``A primer on near-field
  beamforming for arrays and reconfigurable intelligent surfaces,'' in
  \emph{Proc. 55th Asilomar Conf. Signals, Syst., Comput.}, Pacific Grove, CA,
  USA, Oct. 2021, pp. 105--112.

\bibitem{10220205}
Y.~Liu, Z.~Wang, J.~Xu, C.~Ouyang, X.~Mu, and R.~Schober, ``Near-field
  communications: A tutorial review,'' \emph{IEEE Open J. Commun. Soc.},
  vol.~4, pp. 1999--2049, 2023.

\bibitem{6756990}
J.~Zhang, B.~Zhang, S.~Chen, X.~Mu, M.~El-Hajjar, and L.~Hanzo, ``Pilot
  contamination elimination for large-scale multiple-antenna aided {OFDM}
  systems,'' \emph{{IEEE} J. Sel. Topics Signal Process.}, vol.~8, no.~5, pp.
  759--772, Oct. 2014.

\bibitem{10143629}
W.~Yu \emph{et~al.}, ``An adaptive and robust deep learning framework for {T}hz
  ultra-massive {MIMO} channel estimation,'' \emph{{IEEE} J. Sel. Topics Signal
  Process.}, vol.~17, no.~4, pp. 761--776, July 2023.

\bibitem{10042005}
J.~Zhu, K.~Liu, Z.~Wan, L.~Dai, T.~J. Cui, and H.~V. Poor, ``Sensing {RIS}s:
  Enabling dimension-independent {CSI} acquisition for beamforming,''
  \emph{{IEEE} Trans. Inf. Theory}, vol.~69, no.~6, pp. 3795--3813, June 2023.

\bibitem{942570}
D.~M. Sheen, D.~L. McMakin, and T.~E. Hall, ``Three-dimensional millimeter-wave
  imaging for concealed weapon detection,'' \emph{{IEEE} Trans. Microw. Theory
  Techn.}, vol.~49, no.~9, pp. 1581--1592, Sept. 2001.

\bibitem{4684631}
G.~V.~d. Plas and B.~Verbruggen, ``A 150 {MS}/s 133$~\mu${W} 7 bit {ADC} in 90
  nm digital {CMOS},'' \emph{{IEEE} J. Solid-State Circuits}, vol.~43, no.~12,
  pp. 2631--2640, Dec. 2008.

\bibitem{MaHongGaoJingLiuBaiChengCui}
Q.~Ma \emph{et~al.}, ``Smart sensing metasurface with self-defined functions in
  dual polarizations,'' \emph{Nanophotonics}, vol.~9, no.~10, pp. 3271--3278,
  Apr. 2020.

\bibitem{9769720}
J.~G. Ravanne, Y.~L. Then, H.~T. Su, and I.~Hijazin, ``Microwave power
  detectors in different {CMOS} design architectures: A review,'' \emph{{IEEE}
  Microw. Mag.}, vol.~23, no.~6, pp. 76--84, June 2022.

\bibitem{7448359}
S.~Kshattry, W.~Choi, C.~Yu, and K.~O. Kenneth, ``Compact diode connected
  {MOSFET} detector for on-chip millimeter-wave voltage measurements,''
  \emph{{IEEE} Microw. Wireless Compon. Lett.}, vol.~26, no.~5, pp. 349--351,
  May 2016.

\bibitem{tagoram}
L.~Yang, Y.~Chen, X.-Y. Li, C.~Xiao, M.~Li, and Y.~Liu, ``Tagoram: {R}eal-time
  tracking of mobile {RFID} tags to high precision using {COTS} devices,'' in
  \emph{Proc. 20th Annu. Int. Conf. Mob. Comput. Netw. (MobiCom'14)}, Maui,
  Hawaii, USA, 2014.

\bibitem{9246248}
H.~Wu, M.~Ravan, and R.~K. Amineh, ``Holographic near-field microwave imaging
  with antenna arrays in a cylindrical setup,'' \emph{{IEEE} Trans. Microw.
  Theory Techn.}, vol.~69, no.~1, pp. 418--430, Jan. 2021.

\bibitem{PhysRevLett.118.183901}
P.~M. Holl and F.~Reinhard, ``Holography of {W}i-{F}i radiation,'' \emph{Phys.
  Rev. Lett.}, vol. 118, pp. 183\,901--183\,906, May 2017.

\bibitem{Miller}
D.~A.~B. Miller, ``Communicating with waves between volumes: Evaluating
  orthogonal spatial channels and limits on coupling strengths,'' \emph{Appl.
  Opt.}, vol.~39, no.~11, pp. 1681--1699, Apr. 2000.

\bibitem{GLOBECOM}
S.~Liu and X.~Yu, ``Low-complexity near-field localization with {XL-MIMO}
  sectored uniform circular arrays,'' in \emph{Proc. IEEE Global Commun. Conf.
  (GLOBECOM)}, Cape Town, South Africa, Dec. 2024.

\bibitem{6373401}
D.~Slepian, ``Estimation of signal parameters in the presence of noise,''
  \emph{IRE Trans. Inf. Theory}, vol.~3, no.~3, pp. 68--89, Mar. 1954.

\bibitem{3gpp.38.101}
3GPP, ``User equipment ({UE}) radio transmission and reception,'' {3rd
  Generation Partnership Project (3GPP)}, Technical Specification (TS) 38.101,
  Dec. 2023.

\end{thebibliography}

\end{document}